\newcommand{\keywords}[1]{\par\addvspace\baselineskip
\noindent\keywordname\enspace\ignorespaces#1}
\newcommand{\E}{\ensuremath{\mathbb{E}}}  
\newcommand{\setsymbol}[1]{\mathbb{#1}}
    \newcommand{\N}{\@ifstar{\setsymbol{N}_0}{\setsymbol{N}}}
\newcommand{\Q}{\setsymbol{Q}}
\newcommand{\R}{\setsymbol{R}}
\newcommand{\Z}{\setsymbol{Z}}
    \newcommand{\Nupto}{\@ifstar{\Nupto@star}{\Nupto@nostar}}
    \newcommand{\Nupto@star}[1]{[#1]_0}
    \newcommand{\Nupto@nostar}[1]{[#1]}
    \newcommand{\Qpos}{\@ifstar{\Qpos@star}{\Qpos@nostar}}
    \newcommand{\Qpos@star}{\Q_{\geq 0}}
    \newcommand{\Qpos@nostar}{\Q_{> 0}}
    \newcommand{\Rpos}{\@ifstar{\Rpos@star}{\Rpos@nostar}}
    \newcommand{\Rpos@star}{\R_{\geq 0}}
    \newcommand{\Rpos@nostar}{\R_{> 0}}
\DeclareFontFamily{OT1}{mvs}{}
\DeclareFontShape{OT1}{mvs}{m}{n}{<-> fmvr8x}{}
\newcommand{\NP}{$\mathcal{NP}$}
\newcommand{\Pol}{$\mathcal{P}$}
    \newcommand{\pls}{$\mathcal{PLS}$}
    \newcommand{\mc}{\textsc{Max-Cut}}
    \newcommand{\pspace}{$\mathcal{PSPACE}$}
\newlength{\alglabelwidth}
\theoremstyle{plain}
\newtheorem{observation}[theorem]{Observation}
\theoremstyle{break}
\theoremstyle{nonumberplain}
\begin{document}

\title{Settling the Complexity of Local Max-Cut (Almost) Completely \thanks{Partially supported by the German Research Foundation (DFG) Priority Programme 1307 ``Algorithm Engineering''.}}

\titlerunning{Settling the Complexity of Local Max-Cut (Almost) Completely.}
\author{Robert Els\"asser \and Tobias Tscheuschner}
\institute{University of Paderborn \newline Faculty of Computer Science, Electrical Engineering and Mathematics \newline \{elsa, chessy\}@upb.de}

\maketitle

\begin{abstract} We consider the problem of finding a local optimum for \textsc{Max-Cut} with FLIP-neighborhood, in which exactly one node changes the partition. Sch\"affer and Yannakakis (SICOMP, 1991) showed \pls-completeness of this problem on graphs with unbounded degree. On the other side, Poljak (SICOMP, 1995) showed that in cubic graphs every FLIP local search takes $O(n^{2})$ steps, where $n$ is the number of nodes. Due to the huge gap between degree three and unbounded degree, Ackermann, R\"oglin, and V\"ocking (JACM, 2008) asked for the smallest $d \in \N$ for which the local \textsc{Max-Cut} problem with FLIP-neighborhood on graphs with maximum degree $d$ is \pls-complete. In this paper, we prove that the computation of a local optimum on graphs with maximum degree five is \pls-complete. Thus, we solve the problem posed by Ackermann et al.~almost completely by showing that $d$ is either four or five (unless \pls\ $\subseteq$ \Pol). 

On the other side, we also prove that on graphs with degree $O(\log n)$ every FLIP local search has probably polynomial smoothed complexity. Roughly speaking, for any instance, in which the edge weights are perturbated by a (Gaussian) random noise with variance $\sigma^2$, every FLIP local search terminates in time polynomial in $n$ and $\sigma^{-1}$, with probability $1-n^{-\Omega(1)}$. Putting both results together, we may conclude that although local \textsc{Max-Cut} is likely to be hard on graphs with bounded degree, it can be solved in polynomial time for slightly perturbated instances with high probability.
\keywords{Max-Cut, PLS, graphs, local search, smoothed complexity}
\end{abstract}


\section{Introduction}
For an undirected graph $G=(V,E)$ with weighted edges $w:E \rightarrow \mathbb{N}$ a cut is a 
partition of $V$ into two sets $V_{1},V_{2}$. The weight of the cut is the sum of the weights 
of the edges connecting nodes between $V_1$ and $V_2$. The \textbf{\textsc{Max-Cut}} problem asks for a cut of 
maximum weight. Computing a maximum cut is one of the most famous problems in computer science 
and is known to be \NP-complete even on graphs with maximum degree three \cite{garey1}. For a 
survey of \mc\ including applications see \cite{poljak2}.

A frequently used approach of dealing with hard combinatorial optimization problems is local search. 
In local search, to every solution there is assigned a set of neighbor solutions, i.\ e. a 
neighborhood. The search begins with an initial solution and iteratively moves to better neighbors 
until no better neighbor can be found. For a survey of local search, we refer to \cite{monien2}. To encapsulate many local search problems, Johnson et al.~\cite{johnson1} introduced the complexity class \pls\ (polynomial local search) and initially 
showed \pls-completeness for the \textsc{Circuit-Flip} problem. Sch\"affer and Yannakakis \cite{schaeffer1} 
showed \pls-completeness for many popular local search problems 
including the local \mc\ problem with FLIP-neighborhood -- albeit their reduction builds graphs with linear degree in the worst case. Moreover, they introduced the notion of so 
called \textit{tight} \pls-reductions which preserve not only the existence of instances and 
initial solutions that are exponentially many improving steps away from any local optimum but 
also the \pspace-completeness of the computation of a local optimum reachable by improving steps 
from a given solution. 

In a recent paper Monien and Tscheuschner \cite{monien1} showed the two properties that are preserved by tight \pls-completeness proofs for the local 
\mc\ problem on graphs with maximum degree four. However, their proof did not use a \pls-reduction; 
they left open whether the local \mc\ problem is \pls-complete on graphs with maximum degree four. 
For cubic graphs, Poljak \cite{poljak1} showed that any FLIP-local search takes $O(n^2)$ improving 
steps, where Loebl \cite{loebl1} earlier showed that a local optimum can be found in polynomial 
time using an approach different from local search. Thus, it is unlikely that computing a local 
optimum is \pls-complete on graphs with maximum degree three.

Due to the huge gap between degree three and unbounded degree, 
Ackermann et al.~\cite{ackermann1} asked for the smallest $d$ such that on graphs 
with maximum degree $d$ the 
computation of a local optimum is \pls-complete. 
In this paper, we show that $d$ is either four or five (unless \pls\  $\subseteq$ \Pol), 
and thus solve the above problem almost completely. A related 
problem has been considered by Krentel \cite{krentel1}. He showed \pls-completeness for 
a satisfiability problem with trivalent variables, a clause length of at most four, and maximum 
occurrence of the variables of three. 

Our result has impact on many other problems, since the local \mc\ has been the basis 
for many \pls-reductions in the literature. Some of these reductions directly carry over the 
property of maximum degree five in some sense and result in \pls-completeness of the corresponding 
problem even for very restricted sets of feasible inputs. In particular, \pls-completeness follows 
for the \textsc{Max-2Sat} problem \cite{schaeffer1} with FLIP-neighborhood, in which exactly one variable changes its value, even if every variable occurs at most ten times. \pls-completeness also follows for the 
problem of computing a Nash Equilibrium in \textsc{Congestion Games} (cf.~\cite{fabrikant1},~\cite{ackermann1}) in which each 
strategy contains at most five resources. The 
problem to \textsc{Partition} \cite{schaeffer1} a graph into two equally sized sets of nodes by 
minimizing or maximizing the weight of the cut, where the maximum degree is six and the neighborhood consists of all solutions in which two nodes of different partitions are exchanged, is also \pls-complete. Moreover, our \pls-completeness proof was already helpful showing a complexity result in hedonic games \cite{gairing1}.


In this paper, we also consider the smoothed complexity of any FLIP local search on graphs 
in which the degrees are bounded by $O(\log n)$. This performance measure has been 
introduced by Spielman and Teng in their seminal paper on the smoothed analysis 
of the Simplex algorithm \cite{ST04}\footnote{For this work, Spielman and Teng was awarded the 
G\"odel Prize in 2008.}. Since then, a large number of papers deal with the smoothed 
complexity of different algorithms. In most cases, smoothed analysis 
is used to explain the speed of certain algorithms in practice, which have 
an unsatisfactory running time according to their worst case complexity. 

The smoothed measure of an algorithm on some 
input instance is its expected performance 
over random perturbations of that instance, and the smoothed complexity 
of an algorithm is the maximum smoothed measure over all input instances.   
In the case of an LP, the goal is to 
$\mbox{maximize } z^T x \mbox{ subject to } Ax \leq b $, for given vectors $z$, $b$, and matrix $A$, where
the entries of $A$ are perturbated by Gaussian random variables with mean 
$0$ and variance $\sigma^2$. That is, we add to each entry $a_{i,j}$ some 
value $\max_{i,j} a_{i,j} \cdot y_{i,j}$, where $y_{i,j}$ is
a Gaussian random variable with mean $0$ and standard deviation 
$\sigma$. Spielman and Teng showed that an LP, which is perturbated by some 
random noise as described before, has expected running time polynomial 
in $n$, $m$, and $\sigma$. This result has further been improved by 
Vershynin \cite{Ver06}. The smoothed complexity of 
other linear programming algorithms has been considered in e.g.~\cite{BD02}, and 
quasi-concave minimization was studied in \cite{KN07}. 

Several other algorithms from 
different areas have been analyzed w.\ r.\ t. ~their smoothed complexity (see \cite{ST09} 
for a comprehensive description). 
Two prominent examples of local search algorithms with polynomial smoothed 
complexity are $2$-opt TSP \cite{englert1} and $k$-means \cite{AMR09}. 
We also mention here the 
papers of Beier, R\"oglin, and V\"ocking \cite{BV04,RV07} on the smoothed analysis of integer linear 
programming. They showed that if $\Pi$ is a certain class of integer linear programs,
then $\Pi$ has an algorithm of probably polynomial smoothed complexity\footnote{For 
the definition of probably polynomial smoothed complexity see Section \ref{secsmoothed}.}
iff $\Pi_u \in ZPP$, where $\Pi_u $ is the unary representation of $\Pi$, and 
$ZPP$ denotes the class of decision problems solvable by a randomized algorithm with 
polynomial expected running time that always returns the correct answer. The results 
of \cite{BV04,RV07} imply that e.g.~$0/1$-knapsack, constrained shortest path, and 
constrained minimum weighted matching have probably polynomial smoothed complexity.
Unfortunately, the results of these papers cannot be used to settle the 
smoothed complexity of local \textsc{Max-Cut}.
\paragraph{\textbf{Overview}}

In section \ref{sec:substitute}, we introduce a technique by which we substitute graphs whose nodes of degree greater than five have a certain type -- we will call these nodes \textbf{comparing} -- by graphs of maximum degree five. In particular, we show that certain local optima in the former graphs induce unique local optima in the latter ones. In section \ref{deg5proof} we show an overview of the proof of the \pls-completeness of computing a local optimum of \mc\ on graphs with maximum degree five by reducing from the \pls-complete problem \textsc{CircuitFlip}. In a nutshell, we map instances of \textsc{CircuitFlip} to graphs whose nodes of degree greater than five are comparing. Some parts of the graphs are adjustments of subgraphs of the \pls-completeness proof of \cite{schaeffer1}. Then, using our technique, we show that local optima for these graphs induce local optima in the corresponding instances of \textsc{CircuitFlip}. 

In section \ref{secsmoothed} we show that on graphs with degree $O(\log n)$ local \mc\ has probably polynomial smoothed complexity. To obtain this result, we basically prove that every improving step w.\ r.\ t. the FLIP-neighborhood increases the cut by at least a polynomial value in $n$ and/or $\sigma$, with high probability.

\section{Preliminaries}
A graph $G$ together with a $2$-partition $P$ of $V$ is denoted by $G_{P}$. We let $c_{G_{P}}:V \rightarrow \{0,1\}$ with $c_{G_{P}}(u)=1$ if and only if $u \in V_{1}$ in $G_{P}$. We let $c_{G_{P}}(u)$ be the \textbf{color} of $u$ in $G_{P}$, where $u$ is white if $c_{G_{P}}(u)=0$ and black otherwise. If the considered graph is clear from the context then we also just write $c_{P}(v)$ and if even the partition is clear then we omit the whole subscript. For convenience we treat the colors of the nodes also as truth values, i.\ e. black corresponds to \textit{true} and white to \textit{false}. For a vector $v$ of nodes we let $c(v)$ be the vector of colors induced by $c$. We say that an edge $\{u,v\}$ is \textbf{in the cut} in $P$ if $c_{P}(u) \neq c_{P}(v)$. For a node $u$ we say that $u$ \textbf{flips} if it changes the partition. A node $u$ is \textbf{happy} in $G_{P}$ if a flip of $u$ does not increase the weight of the cut, and \textbf{unhappy} otherwise. Since we consider weighted graphs, we also say that a flip increases the cut if it increases the weight of the cut. A partition $P$ is a \textbf{local optimum} if all nodes in $G_{P}$ are happy. 

A local search problem $\Pi$ consists of a set of instances $\mathcal{I}$, a set of feasible solutions $\mathcal{F}(I)$ for every instance $I \in \mathcal{I}$, and an objective function $f:\mathcal{F}(I) \rightarrow \Z$. In addition, every solution $s \in \mathcal{F}(I)$ has a neighborhood $\mathcal{N}(s,I) \subseteq \mathcal{F}(I)$. For an instance $I \in \mathcal{I}$, the problem is to find a solution $s \in \mathcal{F}(I)$ such that for all $s' \in \mathcal{N}(s,I)$ solution $s'$ does not have a greater value than $s$ with respect to $f$ in case of maximization and not a lower value in case of minimization.
 	
A local search problem $\Pi$ is in the class $\boldsymbol{\mathcal{PLS}}$ \cite{johnson1} if the following three polynomial time algorithms exist: algorithm A computes for every instance $I \in \mathcal{I}$ a feasible solution $s \in \mathcal{F}(I)$, algorithm B computes for every $I \in \mathcal{I}$ and $s \in \mathcal{F}(I)$ the value $f(s)$, and algorithm C returns for every $I \in \mathcal{I}$ and $s \in \mathcal{F}(I)$ a better neighbor solution $s' \in \mathcal{N}(s,I)$ if there is one and ``locally optimal'' otherwise. A problem $\Pi \in $ \pls\ is $\boldsymbol{\mathcal{PLS}}$\textbf{-reducible} to a problem $\Pi' \in $ \pls\ if there are the following polynomial time computable functions $\Phi$ and $\Psi$. The function $\Phi$ maps instances $I$ of $\Pi$ to instances of $\Pi'$ and $\Psi$ maps pairs $(s,I)$, where $s$ is a solution of $\Phi(I)$, to solutions of $I$, such that for all instances $I$ of $\Pi$ and local optima $s^{*}$ of $\Phi(I)$ the solution $\Psi(s^{*},I)$ is a local optimum of $I$. Finally, a problem $\Pi \in $ \pls\ is $\boldsymbol{\mathcal{PLS}}$\textbf{-complete} if every problem in \pls\ is \pls-reducible to $\Pi.$

In our technique, as well as in the \pls-completeness proof, we make use of a result of Monien and Tscheuschner \cite{monien1}. They showed a property for a set of graphs containing two certain types of nodes of degree four. 
Since we do not need their types in this paper, we omit the restrictions on the nodes and use the following weaker proposition.
\begin{lemma}[\cite{monien1}]
\label{Gf} Let $C_{f}$ be a boolean circuit with $N$ gates which computes a function $f:\{0,1\}^{n} \rightarrow \{0,1\}^{m}$. Then, using $O(log N)$ space, one can compute a graph $G_{f}=(V_{f},E_{f})$ with maximum degree four containing nodes $s_{1}, \ldots ,s_{n},$ $t_{1}, \ldots ,t_{m}$ $\in V_{f}$ of degree one such that for the vectors $s:=(s_1, \ldots ,s_n), t:=(t_1, \ldots ,t_n)$ we have $f(c_{P}(s))=c_{P}(t)$ in every local optimum $P$ of $G_{f}$.
\end{lemma}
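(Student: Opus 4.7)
\begin{proofsketch}
The plan is to imitate $C_f$ by a bottom-up, gadget-based construction. For each gate of $C_f$ I would provide a constant-size subgraph with two ``input ports'' and one ``output port''; each port keeps one free adjacency slot for external wiring, and all internal nodes have degree at most $4$. The internal edge weights are chosen so that for any fixing of the two input ports the uniquely locally-optimal coloring of the internal nodes forces the gate's value onto the output port. A universal basis such as $\{\text{AND},\text{NOT}\}$ suffices; each such gadget can be designed and verified by hand.

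Since $G_f$ has maximum degree $4$ but a gate output in $C_f$ may feed arbitrarily many later gates, I would handle fan-out by attaching at every gate output a binary ``duplicator'' tree whose root is the gate output and whose leaves feed the downstream gadgets. Each duplicator node is a small subgraph of degree at most $4$ whose every local optimum forces both children to agree with the parent, so the colour of the root propagates unchanged to all leaves. The tree size is linear in the fan-out, keeping $G_f$ polynomial in $N$. To ensure that the input colours $c_P(s)$ propagate to the output colours $c_P(t)$ rather than in the reverse direction, edge weights are scaled exponentially along a topological order of the circuit, so that the happiness condition at any node is dominated by the weight coming from its already-settled predecessors. An induction over the topological layers then yields $f(c_P(s))=c_P(t)$ in every local optimum. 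The $O(\log N)$ space bound follows because the whole graph can be produced by a single scan over $C_f$: for each gate one emits a constant-size gadget and a duplicator tree, each requiring only indices and weight exponents of $O(\log N)$ bits to describe, printed on demand.

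The main obstacle is the gadget-design step: I must show that, even when the input ports are held at arbitrary values by neighbours outside the gadget, the internal nodes have a unique locally-optimal colouring that forces the correct output. This requires a careful hierarchy of weights so that signals entering the input ports dominate the gadget's internal edges, which in turn dominate the wires leaving the output port; it is precisely this requirement that forces the exponential-in-depth weighting mentioned above. Proving uniqueness of the internal local optimum for every combination of port values -- rather than merely existence, which would suffice for a global-optimum reduction -- is the delicate aspect and the only part of the construction that I expect to be genuinely fiddly.
\end{proofsketch}
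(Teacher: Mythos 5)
Your sketch rests on the same two pillars as the argument in \cite{monien1} (reproduced in the appendix of this paper): weights that decrease exponentially along a topological order of $C_f$ so that each node's happiness is dominated by the edges towards its already-settled inputs, and an induction over topological layers that pushes the circuit's values from $s$ to $t$. Where you diverge is in the concrete gadgetry, and the paper's choices make the two issues you flag as ``fiddly'' disappear. First, the paper does not build constant-size gadgets with internal nodes at all: after replacing $C_f$ (w.\ l.\ o.\ g.) by a circuit of NOR gates of fan-in $2$, fan-out $1$ and NOT gates of fan-in $1$, fan-out $\leq 2$, it represents each gate $g_i$ by a \emph{single} node $v_i$, attaches three equal-weight edges of weight $2^i$ (to the two input nodes and to a reference node) for a NOR gate so that $v_i$ becomes a ``type III'' node, and one heavy edge for a NOT gate so that $v_i$ is ``type I''; the happiness rule for these two types then literally \emph{is} the gate semantics. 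Since gates are single nodes, there is no ``unique internal local optimum'' to establish -- the color of $v_i$ is forced directly. Second, fan-out is absorbed into the circuit preprocessing (a tree of fan-out-$2$ NOT gates), which is the same object as your duplicator tree but avoids having to design and verify a separate duplicator gadget with its own weight hierarchy. Third, and this is a point your sketch leaves implicit, the paper devotes a whole chain $v_{N+1},\dots,v_{3N+1}$ of type-I nodes of strictly increasing weight to manufacture the constant colors that the NOR nodes compare against; without some such mechanism your AND gadget has no fixed reference to break the $0/1$ symmetry of local \textsc{Max-Cut}, so you would need to add an analogous device. In short, your route is viable and conceptually the same, but it is heavier: the paper trades your universal-basis gadgets and duplicators for a carefully chosen normal form of $C_f$ under which every gate is a single node whose local-optimality condition directly encodes the gate, sidestepping the uniqueness argument you correctly identify as the delicate step.
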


\begin{definition}
For a polynomial time computable function $f$ we say that $G_{f}=(V_{f},E_{f})$ as constructed in Lemma \ref{Gf} is the graph that \textit{\textbf{looks}} at the input nodes $s_i \in V_{f}$ and \textit{\textbf{biases}} the output nodes $t_i \in V_{f}$ to take the colors induced by $f$. 
\end{definition}

\paragraph{\textbf{Usage of Lemma \ref{Gf}}} Notice first that $G_{f}$ can be constructed in logarithmic space and thus polynomial time for any polynomial time computable function $f$. In the rest of the paper we use the graph $G_{f}$ for several functions $f$ and we will scale the weights of its edges. Then, the edges of $G_{f}$ give incentives of appropriate weight to certain nodes of those graphs to which we add $G_{f}$. The incentives bias the nodes to take the colors induced by $f$. We already point out that for any node $v$ we will introduce at most one subgraph that biases $v$. Moreover, the unique edge $e=\{u,v\}$ incident to a biased node $v$ that is an edge of the subgraph that biases $v$ will in many cases have the lowest weight among the edges incident to $v$. In particular, the weight of $e$ will then be chosen small enough such that the color of $v$, in local optima, depends on the color of $u$ if and only if $v$ is indifferent with respect to the colors of the other nodes adjacent to $v$. Note that in local optima the node $u$ has the opposite color as the color to which $v$ is biased according to $f$. 

\section{Substituting certain nodes of unbounded degree}\label{sec:substitute}


\begin{definition}
Let $G=(V,E)$ be a graph. A node $v \in V$ is called \textbf{comparing} if there is an $m \in \N$ such that
\begin{enumerate}[label=(\roman*)]
		\item $v$ is adjacent to exactly $2m+1$ nodes $u_1^1,u_1^2,u_2^1,u_2^2,\ldots,u_m^1,u_m^2, u \in V \setminus \{v\}$ with edge weights $a_1,\ldots,a_m,\delta,$ as shown in Figure \ref{nodeToSubstitute},
		\item $u$ is a node of a subgraph $G'=(V',E')$ of $G$ that looks at a subset of $V  \setminus \{u,v\}$ and biases $v$,
		\item $a_i \ge 2a_{i+1}$ for all $1 \le i < m$ and $a_{m} \ge 2 \delta$.
\end{enumerate}
The subgraph $G'$ is called the \textbf{biaser} of $v$. For $u_{i}^{j}$ with $1 \le i \le m$, $1 \le j \le 2$ we call the node $u_{i}^{k}$ with $1 \le k \le 2$ and $k \ne j$ adjacent to $v$ via the unique edge with the same weight as $\{u_{i}^{j}, v\}$ the \textbf{counterpart} of $u_{i}^{j}$ with respect to $v$.
\end{definition}

	\begin{figure}[htb]
		\centering
			\includegraphics[width=0.43\textwidth]{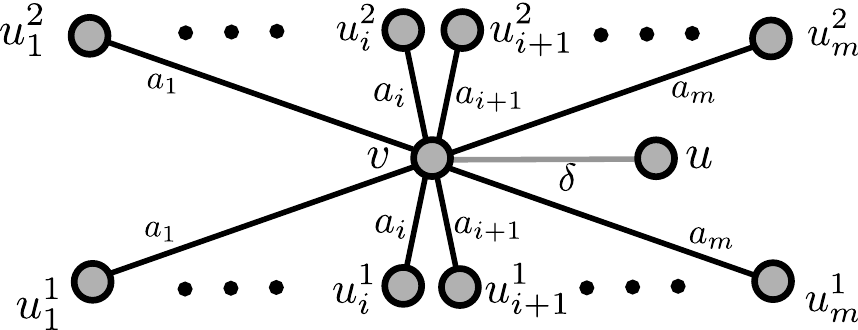}
			\caption{Node $v$ is a comparing node.}
			\label{nodeToSubstitute}
	\end{figure}

The name of the comparing node stems from its behaviour in local optima. If we treat the colors of the neighbors $u_{1}^1,\ldots,u_m^1$ of $v$ as a binary number $a$, with $u_{1}^1$ being the most significant bit, and the colors of $u_{1}^2,\ldots,u_m^2$ as the bitwise complement of a binary number $b$ then, in a local optimum, the comparing node $v$ is white if $a > b$, it is black if $a < b$, and if $a=b$ then $v$ has the color to which it is biased by its biaser. In this way, the color of $v$ ``compares'' $a$ and $b$ in local optima.

	\begin{figure}[htb]
		\centering
			\includegraphics[width=0.8\textwidth]{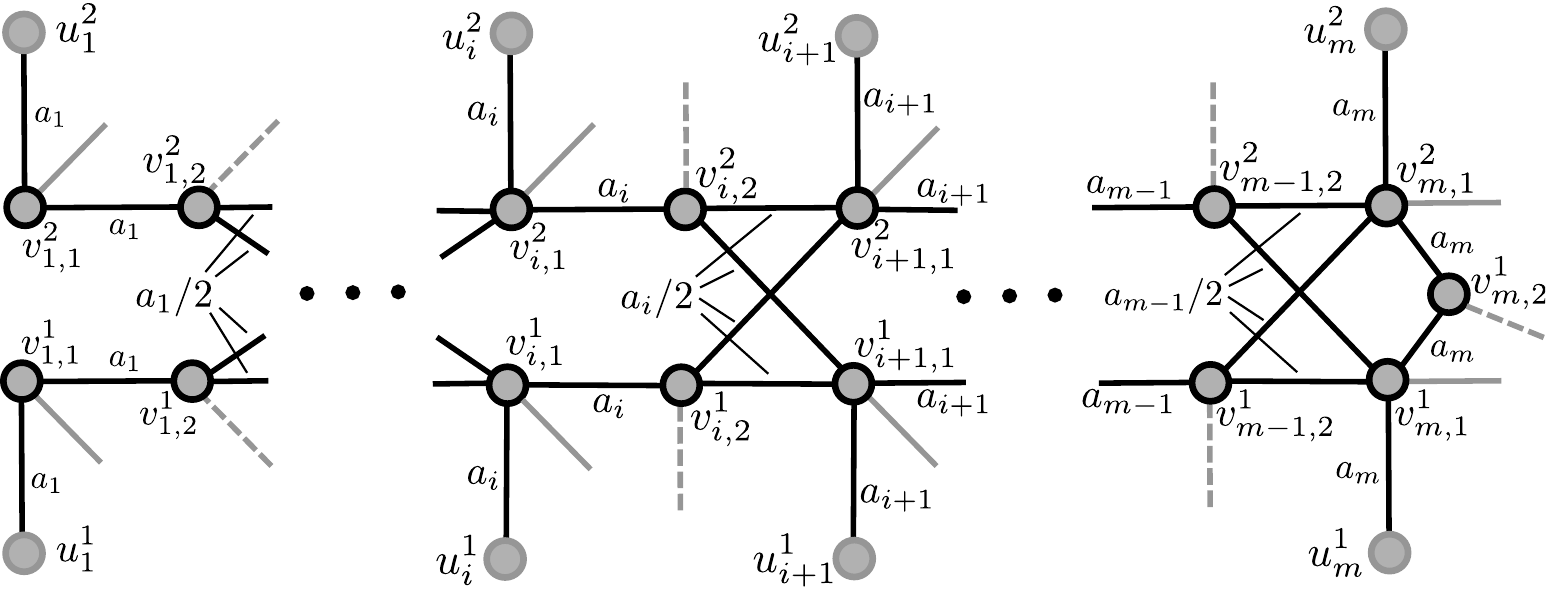}
			\caption{The gadget that substitutes a comparing node $v$.}
			\label{nodeSubstitute}
	\end{figure}	

In the following, we let $G=(V,E)$ be a graph and $v \in V$ be a comparing node with adjacent nodes and incident edges as in Figure \ref{nodeToSubstitute}. We say that we \textbf{degrade} $v$ if we remove $v$ and its incident edges and add the following nodes and edges. We introduce nodes $v_{i,j}^{k}$ for $1 \leq i < m, 1 \leq j \leq 2, 1 \leq k \leq 2$, nodes $v_{m,1}^{k}$ for $1 \leq k \leq 2$, and $v_{m,2}^{1}$ with edges and weights as depicted in Figure \ref{nodeSubstitute} -- the nodes $u_{i}^{j}$ in Figure \ref{nodeSubstitute} have gray circumcircles to indicate that they, in contrast to the other nodes, also occur in $G$. Furthermore, we add a subgraph $G''$ that looks at $u$ and biases all nodes $v_{i,1}^{k}$ to the opposite of the color of $u$ (this is illustrated by short gray edges in Figure \ref{nodeSubstitute}) and the nodes $v_{i,2}^{k}$ to the color of $u$ (short gray dashed edges). The weights of the edges of $G''$ are scaled such that each of them is strictly smaller than $\delta$. Note that due to the scaling the color of the unique node of $G''$ adjacent to $u$ does not affect the happiness of $u$ -- node $u$ is therefore not depicted in Figure \ref{nodeSubstitute} anymore. We let \textbf{\textit{G(G,v)}} be the graph obtained from $G$ by degrading $v$ and we call $v$ \textbf{weakly indifferent} in a partition $P$ if $c_P(u_i^1) \ne c_P(u_i^2)$ for all $1 \le i \le m$. If $v$ is not weakly indifferent then we call the two nodes $u^{1}_{i},u^{2}_{i}$ adjacent to $v$ via the edges with highest weight for which $c_{P}(u^{1}_{i})=c_{P}(u^{2}_{i})$ the \textbf{decisive neighbors} of $v$ in $P$. We let $V_{com} \subseteq V$ be the set of comparing nodes of $V$, and for a partition $P$ of the nodes of $G(G,v)$ we let $\boldsymbol{col_P}: V_{com} \rightarrow \{0,1\}$ be the partial function defined by

$$ col_P(v)=\left\{\begin{array}{cl} 0, & \mbox{if for all } i,j: c_{P}(v_{i,1}^{j}) = 0  \mbox{ and } c_{P}(v_{i,2}^{j}) = 1, \\ 1, & \mbox{if for all } i,j: c_{P}(v_{i,1}^{j}) = 1  \mbox{ and } c_{P}(v_{i,2}^{j}) = 0. \end{array}\right. $$

We say that a comparing node $v$ has the color $\kappa \in \{0,1\}$ in a partition $P$ if $col_P(v)=\kappa$.

\begin{theorem}\label{substitute} Let $G=(V,E)$ be a graph, $v \in V$ a comparing node, its adjacent nodes and incident edges as in Figure \ref{nodeToSubstitute}, $P$ be a local optimum of $G$ such that in $P$ the biaser of $v$ biases $v$ to $c_P(v)$, i.\ e. $c_{P}(u) \ne c_{P}(v)$. Let $P'$ be a partition of the nodes of $G(G,v)$ such that $c_{P}(w)=c_{P'}(w)$ for all $w \in V \setminus \{v\}$. Then, $P'$ is a local optimum if and only if $c_{P}(v) = col_{P'}(v)$.
\end{theorem}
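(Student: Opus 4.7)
The plan is to split the biconditional into two implications, using as a common first step the observation that every node of $G$ outside $\{v\}\cup\{u_1^1,u_1^2,\ldots,u_m^1,u_m^2\}$ has the same neighborhood and incident weights in $G$ as in $G(G,v)$. Such a node is therefore happy in $P'$ iff it was happy in $P$, so the full analysis reduces to (i) the former neighbors $u_i^j$ of $v$, whose single edge of weight $a_i$ to $v$ has been replaced by an edge of the same weight to one of the new gadget nodes, (ii) the gadget nodes $v_{i,j}^k$ themselves, and (iii) the nodes of the added biaser $G''$.

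For the ``if'' direction, I would assume $col_{P'}(v)=c_P(v)$, so all $v_{i,1}^k$ carry color $c_P(v)$ and all $v_{i,2}^k$ carry the opposite color. Each former neighbor $u_i^j$ then sees a neighbor of color $c_P(v)$ across an edge of weight $a_i$, exactly mirroring the deleted edge to $v$, so its flip-balance is unchanged and its happiness transfers verbatim from $P$ to $P'$. Happiness of each $v_{i,j}^k$ follows by direct inspection of Figure \ref{nodeSubstitute}: in the consistent configuration the heavy internal edges give it one neighbor of each color, while the light biasing edge from $G''$ pulls it to its current side, so flipping strictly decreases the cut. The nodes of $G''$ itself are happy by the defining property of the biaser from Lemma \ref{Gf}, only scaled to weights strictly below $\delta$.

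The ``only if'' direction is where the real work lies, and I would carry it out in two stages: first showing that $col_{P'}(v)$ is well-defined (i.e.\ both pairs $v_{i,1}^1,v_{i,1}^2$ and $v_{i,2}^1,v_{i,2}^2$ share a color for every $i$, and the two shared colors are opposite), then showing that the common color agrees with $c_P(v)$. Both stages proceed by downward induction on $i$ from $m$ to $1$. The base case $i=m$ is special, because $v_{m,2}^2$ is absent: the remaining three nodes, together with their biasing edges from $G''$ and the heavy edges to $u_m^1,u_m^2$, admit only the two expected local optima once the $u_m^j$ colors are fixed, thanks to $a_m\ge 2\delta$. The inductive step uses the gap $a_i\ge 2a_{i+1}$ to argue that any inconsistency at level $i$ would give an improving flip for some $v_{i,j}^k$, contradicting local optimality of $P'$. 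Once consistency is forced, the common color must equal $c_P(v)$ because the ladder emulates the very tournament that set the color of $v$ in $G$: the heaviest pair $u_i^1,u_i^2$ with $c_{P'}(u_i^1)=c_{P'}(u_i^2)$ propagates its color upward, and when no such pair exists (i.e.\ $v$ would be weakly indifferent in $G$) only the biasing edges from $G''$ remain active, whose signs were set so that $col_{P'}(v)$ is the opposite of $c_{P'}(u)=c_P(u)$, which equals $c_P(v)$ by the hypothesis on $P$.

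I expect the main obstacle to be precisely this downward induction. Bookkeeping the case split on the triples $(c_{P'}(u_i^1),c_{P'}(u_i^2),\text{level-}(i{+}1)\text{-color})$ and exploiting the geometric hierarchy $a_1\ge 2a_2\ge\ldots\ge 2^{m-1}a_m\ge 2^m\delta$ to exclude every non-consistent configuration is delicate, especially at the asymmetric boundary level $i=m$. The subtlest point is calibrating the role of $G''$, which must break ties when the ladder is weakly indifferent but remain dominated whenever a decisive pair exists; this is exactly what the weight condition ``each edge of $G''$ is strictly smaller than $\delta$'' is designed to achieve, and the induction has to be set up so that this single inequality suffices at every level.
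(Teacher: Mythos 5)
Your overall decomposition (reduce to the new gadget nodes, the former neighbors $u_i^j$, and the added biaser $G''$; split the biconditional) and your ``if'' direction are consistent in spirit with the paper's proof, though the happiness of each $v_{i,j}^k$ is asserted by ``direct inspection'' where the paper actually recycles the weight computation it sets up for the forward direction. The genuine problem is in your ``only if'' direction, and it is not just a matter of delicate bookkeeping: the downward induction on $i$ from $m$ to $1$ does not match the ladder's propagation structure, so it would fail outright.

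Concretely, the gadget is a two-sided ladder in which the $k=1$ chain propagates from small $i$ to large $i$ (the paper's Lemma~\ref{amrand}: correctness of $v_{i,1}^1$ follows from $c(u_i^1)$ \emph{and} from $v_{i-1,2}^1$ at the level below), while the $k=2$ chain propagates in the opposite direction (Lemma~\ref{nachoben}: $v_{i,2}^2$ from $v_{i+1,1}^2$). So your proposed base case at $i=m$ is not a base case at all: the color of $v_{m,1}^1$ is \emph{not} pinned down by $c(u_m^1)$, $c(u_m^2)$, the biaser, and $a_m \geq 2\delta$ alone, since it also depends on $v_{m-1,2}^1$ through a heavy edge; if $v$ is weakly indifferent you cannot even start at the top of the ladder. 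The paper resolves this by anchoring the induction at the \emph{decisive} level $q$ (the heaviest pair with $c(u_q^1)=c(u_q^2)$), or at $i=1$ when $v$ is weakly indifferent, and then propagating outward in \emph{both} directions with three separate inductions (Lemmas~\ref{amrand}, \ref{nachoben}, \ref{nachunten}), plus the crossover at the top rung (using $v_{m,2}^1$ and $v_{m-1,2}^1$ to seed $v_{m,1}^2$). Your remark that ``the heaviest pair $\ldots$ propagates its color upward'' captures only one of these directions and does not explain how the $k=1$ side gets started. To repair the argument you would need to identify the decisive level first, establish the $k=1$ chain upward from $i=1$ to that level, transfer to the $k=2$ side, and only then run the downward argument for $k=2$ and the upward argument for $i\geq q$---which is exactly the structure the paper uses.
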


Note the restriction that in the local optimum $P$ the biaser of $v$ biases $v$ to the color that $v$ in fact has in $P$ and \textit{not} to the opposite. In the \pls-completeness proof in section \ref{deg5proof} the biaser of any comparing node $v$ is designed to bias $v$ to the color that $v$ has in a local optimum $P$ due the colors of its neighbors. Then, we can use Theorem \ref{substitute} to argue about $col_{P}(v)$ in $G(G,v)$.

\begin{proof} Let $\kappa \in \{0,1\}$ be the color to which $v$ is biased by its biaser in $P$, i.\ e. $\kappa := c_{P}(v).$ For all $i,j$ we call the color of $v_{i,1}^{j}$ \textbf{correct} if $c_{P'}(v_{i,1}^{j}) = \kappa$ and we call the color of $v_{i,2}^{j}$ correct if $c_{P'}(v_{i,2}^{j}) = \overline{\kappa}$. Moreover, we call $v_{i,j}^{k}$ correct for any $i,j,k$ if it has its correct color. 

``$\Rightarrow$'': Let $P'$ be a local optimum. Note that each node $v_{i,j}^{k}$ is biased by an edge with weight lower than $\delta$ to its correct color. Therefore, to show that it is correct in the local optimum $P'$, it suffices to show that it gains at least half of the sum of weights of the incident edges with weight greater than $\delta$ if it is correct. We prove the Theorem by means of the following Lemmas which are each proven via straightforward inductive arguments.

\begin{lemma}\label{amrand}
Let $q \le m$ and $c_{P}(u_{i}^{1}) = \overline{\kappa}$ for all $i \le q$. Then, $v_{i,1}^{1}$ and $v_{i,2}^{1}$ are correct for all $i \le q$.
\end{lemma}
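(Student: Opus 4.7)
The plan is to prove Lemma~\ref{amrand} by induction on $i$ from $1$ up to $q$, establishing at each step that the cascade of correct colors at levels $1,\dots,i-1$, combined with the hypothesis $c_{P}(u_i^1)=\overline{\kappa}$, forces both $v_{i,1}^1$ and $v_{i,2}^1$ to take their correct colors in the local optimum $P'$. The core ratio driving everything is the geometric decay $a_i \ge 2 a_{i+1}$ together with $a_m \ge 2\delta$, and the fact that every biaser edge of $G''$ has weight strictly below $\delta$; these inequalities mean that the single heaviest ``supporting'' edge at each node dominates the sum of all other incident edge weights.

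For the base case $i=1$, I would look at $v_{1,1}^1$. Reading off Figure~\ref{nodeSubstitute}, its heaviest incident edge is the one to $u_1^1$ of weight $a_1$. Since $c_{P'}(u_1^1)=\overline{\kappa}$, flipping $v_{1,1}^1$ to $\kappa$ gains $a_1$ on this edge, while the worst case loss on the remaining incident edges is bounded by $a_2+a_3+\dots+a_m+\delta$ plus a biaser term strictly less than $\delta$, which is strictly less than $a_1$ by the doubling rule. Local optimality therefore pins $c_{P'}(v_{1,1}^1)=\kappa$. An analogous argument applied to $v_{1,2}^1$, using its heaviest edge (to $v_{1,1}^1$, of the weight indicated in the gadget) together with the biaser pull toward $\overline{\kappa}$, forces $c_{P'}(v_{1,2}^1)=\overline{\kappa}$.

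For the inductive step, suppose $v_{j,1}^1$ and $v_{j,2}^1$ are already correct for every $j<i$. I would then examine $v_{i,1}^1$: its ``upstream'' neighbours in the cascade (the relevant $v_{i-1,\cdot}^1$ nodes) are already at their correct colors by the induction hypothesis, and its neighbour $u_i^1$ has colour $\overline{\kappa}$ by assumption. Each of these edges pushes $v_{i,1}^1$ towards $\kappa$; the doubling property of the weights, together with $a_m\ge 2\delta$ and the sub-$\delta$ biaser edges, guarantees that the weight favouring $\kappa$ strictly exceeds the total weight of all remaining incident edges, so happiness forces $c_{P'}(v_{i,1}^1)=\kappa$. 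Then $v_{i,2}^1$ is handled by an entirely parallel calculation, with the roles of $\kappa$ and $\overline{\kappa}$ swapped and the biaser pulling the opposite way.

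The main obstacle I anticipate is purely bookkeeping: one must read the precise adjacency pattern and the exact weight assigned to each edge of the gadget in Figure~\ref{nodeSubstitute}, and then check for both $v_{i,1}^1$ and $v_{i,2}^1$ that the sum of ``wrong-pulling'' edge weights never quite reaches the weight of the single dominating edge. There is no genuine combinatorial difficulty beyond this; the doubling condition $a_i\ge 2a_{i+1}$, the condition $a_m\ge 2\delta$, and the fact that $G''$'s edges are all strictly lighter than $\delta$, are tailor-made so that every inequality required in the induction is a strict one, which is what allows the proof to go through for all $i\le q$ simultaneously.
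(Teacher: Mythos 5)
Your high-level plan — induct on $i$ and push correctness down the cascade using $c_{P}(u_i^1)=\overline\kappa$ — is the same shape as the paper's argument. But the mechanism you invoke at each step is wrong, and the error is exactly the kind the gadget was built to forbid. You write, for the base case, that the remaining incident edges of $v_{1,1}^1$ are bounded by $a_2+a_3+\dots+a_m+\delta$ plus a biaser term. That is the edge set of the \emph{original comparing node} $v$, not of the gadget node $v_{1,1}^1$. The entire point of the substitution in $G(G,v)$ is that $v$ (degree $2m+1$) is replaced by nodes $v_{i,j}^k$ of degree at most five. So $v_{1,1}^1$ has only one edge to $u_1^1$ (weight $a_1$), one biaser edge of weight $<\delta$, and a couple of internal gadget edges whose weights are set by Figure~\ref{nodeSubstitute} — it emphatically does not see $a_2,\dots,a_m$. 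Your ``the doubling rule makes the heaviest edge dominate all others'' reasoning would only apply to $v$ itself, which no longer exists. The same confusion propagates into your inductive step.

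Beyond the wrong edge set, the criterion you use (``single heaviest edge strictly exceeds the sum of the rest'') is also not the one the proof needs. The paper establishes, before these three lemmas, that each $v_{i,j}^k$ is biased by an edge of weight $<\delta$ to its correct color, so it suffices to show the node gains \emph{at least half} of the total weight of its $>\delta$ edges when correct; the sub-$\delta$ biaser then breaks the tie. That is a balance condition on the handful of gadget edges at each node, not a strict domination by one edge. Concretely, the chain the paper verifies is: $c_P(u_1^1)=\overline\kappa$ forces $v_{1,1}^1$; for each $i\le q$, $v_{i,1}^1$ correct forces $v_{i,2}^1$; and for each $i<q$, $v_{i,2}^1$ correct \emph{together with} $c_P(u_{i+1}^1)=\overline\kappa$ forces $v_{i+1,1}^1$. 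Notice the last step explicitly needs two correct neighbours simultaneously, which already tells you no single edge dominates at $v_{i+1,1}^1$. To repair your proof you would have to read off the actual degree-five neighbourhoods and weights of $v_{i,1}^1$ and $v_{i,2}^1$ from Figure~\ref{nodeSubstitute} and check the ``at least half'' inequality at each node, rather than argue about $a_1$ versus $a_2+\dots+a_m+\delta$.
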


\begin{proof}
We prove the claim by induction on $i$. Due to $c_{P}(u_{1}^{1}) = \overline{\kappa}$ we get the correctness of $v_{1,1}^{1}$. For each $i \le q$ the correctness of $v_{i,1}^{1}$ implies the correctness of $v_{i,2}^{1}$. Moreover, for each $i < q$ the correctness of $v_{i,2}^{1}$ together with $c_{P}(u_{i+1}^{1}) = \overline{\kappa}$ implies the correctness of $v_{i+1,1}^{1}$.
\end{proof}

\begin{lemma}\label{nachoben}
Let $q \le m$, node $v_{i,1}^{1}$ and $v_{i,2}^{1}$ be correct for all $i \le q$, and $v_{q,1}^{2}$ be correct. Then, $v_{i,1}^{2}$ and $v_{i,2}^{2}$ are correct for all $i < q.$
\end{lemma}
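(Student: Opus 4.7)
The plan is to prove Lemma \ref{nachoben} by downward induction on $i$, starting at $i = q$ and descending to $i = 1$. This mirrors Lemma \ref{amrand}, which propagated correctness upward along the first-column nodes $v_{i,j}^{1}$; here we propagate correctness downward along the second-column nodes $v_{i,j}^{2}$, leveraging the first-column correctness already supplied by the hypothesis.

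The base case is immediate: $v_{q,1}^{2}$ is correct by assumption. For the inductive step, I would assume $v_{i,1}^{2}$ is correct for some $i \le q$ with $i \ge 2$, and then deduce the correctness of $v_{i-1,2}^{2}$ and $v_{i-1,1}^{2}$ in two substeps. In the first substep, the correctness of $v_{i,1}^{2}$, together with the correctness of $v_{i-1,1}^{1}$ and $v_{i-1,2}^{1}$ (from the hypothesis), pins down $v_{i-1,2}^{2}$: the heaviest incident edge of $v_{i-1,2}^{2}$ connects it to a correctly-colored neighbor, and this single contribution strictly dominates the combined contribution of all lighter incident edges together with the sub-$\delta$ biaser, so any flip would reduce the cut and $P'$ being a local optimum forces correctness. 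The second substep uses the just-obtained correctness of $v_{i-1,2}^{2}$, together with $v_{i-1,1}^{1}$ from the hypothesis, to force $v_{i-1,1}^{2}$ to be correct by the same flip-dominance reasoning.

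The main obstacle, and the arithmetic heart of the induction, is verifying flip-dominance at each substep uniformly in $i$. The key tool is to telescope the hypothesis $a_i \ge 2 a_{i+1}$ together with $a_m \ge 2\delta$, yielding the strict inequality
\[
a_i \;>\; a_{i+1} + a_{i+2} + \cdots + a_m + \delta.
\]
This guarantees that at every node $v_{i,j}^{2}$ the single heaviest edge to a correctly-colored neighbor outweighs the sum of all lighter incident weights plus the biaser, so no flip can improve the cut. Once this geometric bound is in place, both substeps of the inductive step reduce to a routine comparison of weights, and the induction carries through for every $i < q$.
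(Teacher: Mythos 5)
Your overall strategy---downward induction from $i=q$ to $i=1$, deducing the correctness of $v_{i-1,2}^{2}$ and then $v_{i-1,1}^{2}$ in each step---matches the paper's proof. However, you misidentify the adjacency structure of the gadget from Figure~\ref{nodeSubstitute}, and as a consequence name the wrong type-1 neighbors at each substep. The paper deduces the correctness of $v_{i,2}^{2}$ from the correctness of $v_{i+1,1}^{2}$ (the inductive hypothesis) \emph{together with} $v_{i+1,1}^{1}$; in your shifted indexing, $v_{i-1,2}^{2}$ is pinned down by $v_{i,1}^{2}$ and $v_{i,1}^{1}$, not by $v_{i-1,1}^{1}$ and $v_{i-1,2}^{1}$ as you claim. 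Similarly, the paper deduces $v_{i,1}^{2}$ from $v_{i,2}^{2}$ together with $v_{i-1,2}^{1}$, whereas you cite $v_{i-1,1}^{1}$. These are not interchangeable choices made for convenience: the argument is a local happiness check at each node, so only its actual neighbors in the gadget matter, and you would not be able to verify your flip-dominance claims at the wrong neighbors.

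The second, related gap is in the weight argument itself. You frame correctness as ``the single heaviest edge to a correctly-colored neighbor dominates the sum of all lighter incident edges plus the biaser,'' supported by the telescoped bound $a_i > a_{i+1} + \cdots + a_m + \delta$. But the paper explicitly sets the weaker standard that a gadget node is forced to be correct when it ``gains at least half of the sum of weights of the incident edges with weight greater than $\delta$,'' i.e.\ when it wins a \emph{majority} of its heavy edges plus the $<\delta$ tiebreak; that is why the paper cites \emph{two} correct neighbors for each node rather than one. Moreover, your telescoping inequality concerns the edge weights $a_1,\dots,a_m,\delta$ of the \emph{original} comparing node, which is degraded precisely because it has unbounded degree. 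Each $v_{i,j}^{k}$ in the gadget has bounded degree (at most five), so it simply does not have edges of all those weights incident to it, and the telescoping bound is not the quantity being compared in any local check. The real obstacle is identifying, for each $v_{i,j}^{2}$, which two heavy neighbors together carry at least half the heavy weight and are known to be correct---a purely local reading of Figure~\ref{nodeSubstitute} that your plan does not reconstruct.
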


\begin{proof}
We prove the claim by induction on $i$. Node $v_{q,1}^{2}$ and $v_{q,1}^{1}$ are correct by assumption. For each $i < q$ node $v_{i,2}^{2}$ is correct if $v_{i+1,1}^{2}$ is correct since $v_{i+1,1}^{1}$ is correct by assumption. Moreover, for each $1 < i < q$ node $v_{i,1}^{2}$ is correct if $v_{i,2}^{2}$ is correct since $v_{i-1,2}^{1}$ is correct by assumption. Finally, node $v_{1,1}^{2}$ is correct if $v_{1,2}^{2}$ is correct.
\end{proof}

\begin{lemma}\label{nachunten}
Let $q \le m$. If $v_{q,1}^{1}$ and $v_{q,1}^{2}$ are correct then $v_{i,j}^{k}$ is correct for any $j,k,$ and $q \leq i \leq m.$
\end{lemma}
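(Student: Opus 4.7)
The plan is to proceed by induction on $i$, propagating correctness from $i = q$ upward to $i = m$. The argument is the natural ``upward'' counterpart of the ``downward'' step used in Lemma \ref{nachoben} and uses the same dominance relations among the edges of Figure \ref{nodeSubstitute} that have already been exploited in Lemmas \ref{amrand} and \ref{nachoben}.

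For the base case $i = q$, the hypothesis provides correctness of $v_{q,1}^{1}$ and $v_{q,1}^{2}$. The intra-cell edge joining $v_{q,1}^{k}$ to $v_{q,2}^{k}$ is heavy enough (exactly as used in Lemma \ref{amrand} to deduce $v_{i,2}^{1}$ from $v_{i,1}^{1}$) to force $v_{q,2}^{k}$ into its correct color $\overline{\kappa}$ for both $k \in \{1,2\}$, independent of the remaining incident edges of $v_{q,2}^{k}$. For the inductive step, assume that $v_{j,j'}^{k}$ is correct for every $q \le j \le i$ and all $j', k \in \{1,2\}$, and aim to show the same for $j = i+1$. By the induction hypothesis both $v_{i,2}^{1}$ and $v_{i,2}^{2}$ have color $\overline{\kappa}$; since $v_{i+1,1}^{k}$ is adjacent to both of them (the same cross-connectivity used in Lemma \ref{nachoben}), the pair of cross edges jointly pushes $v_{i+1,1}^{k}$ toward color $\kappa$ and dominates the only potentially conflicting incident edges, namely the edge of weight $a_{i+1}$ to $u_{i+1}^{k}$ and the sub-$\delta$ biaser edge. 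Hence $v_{i+1,1}^{k}$ is correct, and the base-case argument applied one level higher then yields correctness of $v_{i+1,2}^{k}$. At $i+1 = m$ only $v_{m,2}^{1}$ exists, and the same argument applies verbatim to that node.

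The only delicate point is the weight bookkeeping at $v_{i+1,1}^{k}$: one must check that the two cross edges to $v_{i,2}^{1}$ and $v_{i,2}^{2}$ jointly outweigh the edge to $u_{i+1}^{k}$ of weight $a_{i+1}$ together with the biaser contribution. This is guaranteed by the weight choices in Figure \ref{nodeSubstitute}, which are consistent with the comparing-node conditions $a_i \ge 2 a_{i+1}$ and $a_m \ge 2\delta$, and it is precisely the dominance already (implicitly) invoked in Lemmas \ref{amrand} and \ref{nachoben}. Once this bookkeeping is confirmed, the induction runs without further subtlety up to $i = m$ and delivers the claim.
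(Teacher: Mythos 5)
Your proof is correct and takes essentially the same route as the paper's: establish correctness at level $q$, then propagate upward via the intra-cell implication $v_{i,1}^{k}$ correct $\Rightarrow v_{i,2}^{k}$ correct and the cross-edge implication ``$v_{i,2}^{1},v_{i,2}^{2}$ correct $\Rightarrow v_{i+1,1}^{1},v_{i+1,1}^{2}$ correct,'' ending with the special case that at level $m$ only $v_{m,2}^{1}$ exists. The paper's own (very terse) proof performs exactly these alternating deductions; the only small imprecision on your side is bundling the biaser edge among the ``potentially conflicting'' edges and omitting $v_{i+1,2}^{k}$ from the tally, whereas in fact the biaser always pulls toward the correct color and is what breaks the tie against $u_{i+1}^{k}$ and $v_{i+1,2}^{k}$ when $a_i = 2a_{i+1}$, but this does not affect the validity of the argument.
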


\begin{proof}
If $q=m$ then the correctness of $v_{m,1}^{1}$ implies the correctness of $v_{m,2}^{1}$. The case $q < m$ is done by induction on $i$. Node $v_{q,1}^{1}$ and $v_{q,1}^{2}$ are correct by assumption. Assume that $v_{i,1}^{1}$ and $v_{i,1}^{2}$ are correct for an arbitrary $q \le i < m.$ Then, the nodes $v_{i,2}^{1}$ and $v_{i,2}^{2}$ are correct whereafter the correctness of $v_{i+1,1}^{1}$ and $v_{i+1,1}^{2}$ follows. Finally, the correctness of $v_{m,1}^{1}$ implies the correctness of $v_{m,2}^{1}.$
\end{proof}


We first consider the case that $v$ is weakly indifferent. Then, for each $i$ at least one of the nodes $u_{i}^{1}$ and $u_{i}^{2}$ has the color $\kappa$. Due to the symmetry between the nodes $v_{i,j}^{1}$ and $v_{i,j}^{2}$ we may assume w.\ l.\ o.\ g. that $c_{P}(u_{i}^{1}) = \overline{\kappa}$ for all $i.$ Then, Lemma \ref{amrand} implies that $v_{i,1}^{1}$ and $v_{i,2}^{1}$ are correct for all $i$. Then, the correctness of $v_{m,2}^{1}$ and $v_{m-1,2}^{1}$ together imply the correctness of $v_{m,1}^{2}$. Then, Lemma \ref{nachoben} implies the correctness of $v_{i,1}^{2}$ and $v_{i,2}^{2}$ for all $i < m.$

Now assume that $v$ is not weakly indifferent and let $u_{q}^{1}$ and $u_{q}^{2}$ be the decisive neighbors of $v$. As in the previous case we assume w.\ l.\ o.\ g. that $c_{P}(u_{i}^{1}) = \overline{\kappa}$ for all $i \le q.$ Then, due to Lemma \ref{amrand} node $v_{i,1}^{1}$ and $v_{i,2}^{1}$ are correct for all $i \le q$. If $q=1$ then $c(u_{1}^{2})=\overline{\kappa}$ implies the correctness of $v_{1,1}^{2}$ -- recall that by assumption $v$ is biased to the opposite color of the color of the decisive nodes. On the other hand, if $q > 1$ then the correctness of $v_{q-1,2}^{1}$ and $c(u_{q}^{2})=\overline{\kappa}$ together imply the correctness of $v_{q,1}^{2}$. Then, Lemma \ref{nachoben} implies the correctness of $v_{i,1}^{2}$ and $v_{i,2}^{2}$ for all $i < q.$ Finally, Lemma \ref{nachunten} implies the correctness of $v_{i,j}^{k}$ for all $j,k,$ and $q \leq i \leq m.$ 

``$\Leftarrow$'': Assume, that every node $v_{i,j}^{k}$ is correct. As we have seen in ``$\Rightarrow$'' $v_{i,j}^{k}$ is happy then. Moreover, each $u_{i}^{j}$ is also happy since its neighbors have the same colors as in the local optimum $P$ --  recall that if $v_{i,1}^{j}$ is correct it has the same color in $P'$ as $v$ in $P$. The colors of the remaining nodes are unchanged. Therefore, $P'$ is a local optimum. This finishes the proof of Theorem  \ref{substitute}.
\end{proof}

\section{Proof of \pls-Completeness}\label{deg5proof}

Our reduction bases on the following \pls-complete problem \textsc{CircuitFlip} (in \cite{johnson1} it is called \textsc{Flip}, which we avoid in this paper since the neighborhood of \mc\ has the same name).

\begin{definition}[\cite{johnson1}]
An instance of \textbf{\textsc{CircuitFlip}} is a boolean circuit $C$ with $n$ input bits and $m$ output bits. A feasible solution of \textsc{CircuitFlip} is a vector $v \in \{0,1\}^{n}$ of input bits for $C$ and the value of a solution is the output of $C$ treated as a binary number. Two solutions are neighbors if they differ in exactly one bit. The objective is to maximize the output of $C$.
\end{definition}
\begin{theorem}\label{plsred}
The problem of computing a local optimum of the \mc\ problem on graphs with maximum degree five is \pls-complete.
\end{theorem}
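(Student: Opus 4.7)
The plan is to exhibit a \pls-reduction from \textsc{CircuitFlip}. Given a boolean circuit $C$ with $n$ input bits and $m$ output bits, I construct in two stages: first an intermediate weighted graph $H_{C}$ whose only nodes of degree greater than five are \emph{comparing} nodes, and then the final graph $\Phi(C)$ by successively applying Theorem \ref{substitute} to degrade each comparing node. The function $\Psi$ reads a local optimum of $\Phi(C)$ by extracting the colors of the $n$ designated input nodes and returning them as an input vector for $C$.

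The intermediate graph $H_{C}$ follows the template of Sch\"affer and Yannakakis. I introduce input nodes $x_{1},\ldots,x_{n}$ whose colors encode a candidate input; a copy of the graph from Lemma \ref{Gf} that looks at the $x_{i}$ and biases output nodes $y_{1},\ldots,y_{m}$ to the bits of $C(x)$; an $m$-bit register $z_{1},\ldots,z_{m}$ that (via the dynamics of local search) stores the best output value encountered; and an update subgraph that propagates $y$ onto $z$ when the current output exceeds the register. The central component is a comparing node $v$ whose neighbors $u_{i}^{1},u_{i}^{2}$ are attached to $y_{i}$ and to the complement of $z_{i}$, and whose edge weights are $a_{i} = 2^{m-i+1}\delta$ so that the hypothesis $a_{i}\ge 2a_{i+1}\ge\ldots\ge 2\delta$ of the comparing definition is met. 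The biaser of $v$ is a gadget from Lemma \ref{Gf} that computes the indicator of $y>z$ (with the equality case resolved in the same direction that the comparing weights will enforce) and applies the bias through one edge of weight strictly less than $\delta$. In any local optimum of $H_{C}$ the color of $v$ therefore records the outcome of the comparison $y\stackrel{?}{>}z$, and the biaser biases $v$ to exactly the color that $v$ actually takes, which is precisely the hypothesis needed to invoke Theorem \ref{substitute}. Around this comparator I wire the standard ``improve or halt'' payoff structure: an improving single-bit flip of some $x_{i}$ produces a strictly heavier cut only through a chain of flips terminating in an update of $z$, and a local optimum of the graph forces $C(x)$ to be a local optimum of \textsc{CircuitFlip}.

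Having designed $H_{C}$ so that the only high-degree nodes are comparing nodes, I then apply Theorem \ref{substitute} to each. The replacement gadget of Figure \ref{nodeSubstitute} produces only new nodes of bounded degree, and a direct inspection shows that each such node has degree at most five once the single edge from the biaser $G''$ is counted. All other nodes inherit their degree from $H_{C}$, which by design is at most five; in particular the pieces supplied by Lemma \ref{Gf} (both for the circuit computation, for the update logic, and for every biaser) have maximum degree four, leaving a spare port of degree one at each connection point.

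The main obstacle is the global bookkeeping for the biasers: for every comparing node $v$ I must guarantee that in every local optimum $P$ of $H_{C}$ the biaser of $v$ biases $v$ to exactly $c_{P}(v)$, since this is the precondition of Theorem \ref{substitute}. This is arranged by making the biaser compute the same comparison that $v$ resolves in equilibrium and by choosing the weight hierarchy $a_{1}\ge 2a_{2}\ge\ldots\ge 2\delta$ large enough that the comparison strictly dominates the biaser whenever $v$ is not weakly indifferent, while the biaser selects the tie-break in the weakly indifferent case. Once this is verified, Theorem \ref{substitute} yields $col_{P'}(v)=c_{P}(v)$ for each degraded $v$, so a local optimum $P'$ of $\Phi(C)$ naturally induces a local optimum $P$ of $H_{C}$ node-by-node, and $\Psi(P',C):=(c_{P'}(x_{1}),\ldots,c_{P'}(x_{n}))$ is a local optimum of $C$ under the FLIP neighborhood. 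Together with membership of local \mc\ in \pls, this establishes \pls-completeness on graphs of maximum degree five.
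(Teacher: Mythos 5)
Your overall framework is right — reduce from \textsc{CircuitFlip}, build an intermediate graph whose only high-degree nodes are comparing, degrade them via Theorem~\ref{substitute}, and verify the biaser precondition — but your intermediate graph $H_C$ is structurally wrong and the core mechanism it relies on does not exist in a local optimum.

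The fatal step is the ``register'' $z_1,\ldots,z_m$ that ``stores the best output value encountered.'' A local optimum of \textsc{Max-Cut} under FLIP is a \emph{static} partition: there is no notion of history, no ``best so far,'' and no search dynamics you can appeal to. In your construction the $z_i$'s are just nodes of low degree; in any local optimum they settle to whatever colors make them locally happy, which carries no information about past output values. Concretely, nothing prevents a local optimum in which $z$ happens to encode some large value, the comparator $v$ reports $y \leq z$, nothing is unhappy, yet $c(x)$ is far from a local optimum of $C$. Your claim that ``an improving single-bit flip of some $x_i$ produces a strictly heavier cut only through a chain of flips terminating in an update of $z$'' is exactly the thing that would have to be proved, and it is the hard part; it does not follow from the pieces you assembled. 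Moreover, your construction creates a circularity: $G_f$ reads $x$ to produce $y$, and the comparator/biaser then reaches back to pressure $x$. With one copy of the circuit there is no way to break this loop, and nothing ensures that the only fixed points are local optima of \textsc{CircuitFlip}.

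This is precisely why the paper's proof uses \emph{two} isomorphic copies $G_C^0$ and $G_C^1$ of the circuit together with a winner/loser mechanism (the nodes $d^0,d^1,u^0,u^1$ and component types 8--10): each copy computes the improving neighbor $w(x^{\kappa})$, and the subgraph $T^{\kappa}$ (type~7) biases the \emph{other} copy's inputs toward that improving neighbor. The winner's inputs are held stable while the loser's inputs are reset to the improving neighbor, and the roles then swap. This alternation replaces your ``register'' and avoids the circularity. It also forces you to make \emph{every} gate node $g_i^{\kappa}$ (not just one comparator) a comparing node, and to import the Sch\"affer--Yannakakis $y/z$ checking chain (type~4, Lemmas~\ref{propagation}--\ref{naturallemma}) to guarantee that in a local optimum either all gates of the winner compute correctly or some node is unhappy. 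Lemma~\ref{Gf} by itself cannot shoulder this burden, because $G_f$ can only bias via an edge of strictly lower weight than everything else incident to the target, which is too weak to directly force $x_i = w(x)_i$. Until you replace the register with an actual two-copy ping-pong (or some other mechanism that provably has no spurious static fixed points), the reduction does not go through.

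Two smaller points: (1) your precondition check for Theorem~\ref{substitute} (``the biaser biases $v$ to exactly $c_P(v)$'') has to be verified for \emph{every} comparing node simultaneously, and for nodes like $d^0,d^1$ whose high-weight neighbors include each other this requires the intermediary nodes $u^0,u^1$; simply asserting the weight hierarchy does not resolve the mutual dependence. (2) You never argue that the comparator's own biaser can actually compute the tie-break ``$y>z$ with the equality case resolved appropriately'' using only colors available without looking at $v$ itself; the paper handles the analogous issue via the auxiliary nodes $u_{i,j}^{\kappa}$ introduced in types~12--14.
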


\begin{proof}
We reduce from the \pls-complete problem \textsc{Circuitflip}. Let $C$ be an instance of \textsc{Circuitflip} with input variables $X_{1},\ldots,X_{n}$, outputs $C_{1},\ldots,C_{m}$, and gates $G_{N},\ldots,G_{1}$. W.\ l.\ o.\ g. we make the following assumptions. Each input variable occurs exactly once in exactly one gate. All gates are NOR-gates with a fanin of 2 and are topologically sorted such that $i > j$ if $G_{i}$ is an input of $G_{j}$. For the sake of simplicity, we denote $G_i$ also as the output of gate $G_i$. The two inputs of a gate $G_{i}$ are denoted by $I_{1}(G_{i})$ and $I_{2}(G_{i})$, i.\ e. a gate $G_{i}$ computes correctly if and only if $G_{i}=\neg (I_{1}(G_{i}) \vee I_{2}(G_{i}))$. For no gate $G_i$ we have $I_{1}(G_{i})=I_{2}(G_{i})$. The gates $G_{1},\ldots,G_{m}$ are the output of $C$ where $G_{m}$ is the most significant bit and $G_{m+1},\ldots,G_{2m}$ compute the corresponding negations of the output bits. The gates $G_{2m+1},\ldots,G_{2m+n}$ and $G_{2m+n+1},\ldots,G_{2m+2n}$ return the same better neighbor solution if there is one and return $X_{1},\ldots,X_{n}$ otherwise. Finally, let $C(x)$ be the output of $C$ on input $x \in \{0,1\}^{n}$ and $w(x)$ be the better neighbor of $x$ computed by $C$ on input $x$ and assume w.\ l.\ o.\ g. $N>20$ and $m \geq n>4$. 

\textit{The proof in a nutshell:} From $C$ we construct a graph $G_{C}$ consisting of two isomorphic subgraphs $G_{C}^{0},G_{C}^{1}$ representing copies of $C$ -- the overall structure of our proof is inspired by \cite{krentel1}. For each gate $G_{i}$ in $C$ there is a subgraph $S_{i}^{\kappa}$ for $\kappa \in \{0,1\}$ in $G_{C}$. The subgraphs $S_{i}^{\kappa}$ are taken from \cite{schaeffer1} and adjusted such that they have maximum degree five without changing local optima. In particular, each $S_{i}^{\kappa}$ contains a comparing node $g_{i}^{\kappa}$ whose color represents the output of $G_{i}$. To maintain a maximum degree of five we assume that $g_{i}^{\kappa}$ is degraded in $G_{C}$ and argue via Theorem \ref{substitute} about its color in local optima. Then, the colors of the nodes of $S_{i}^{\kappa}$, in local optima, either behave as a NOR-gate or have a \textbf{reset} state, i.\ e. a state in which each input node of $S_{i}^{\kappa}$ is indifferent w.\ r.\ t. its neighbors in $S_{i}^{\kappa}$. For each $\kappa \in \{0,1\}$ we have a subgraph $T^{\kappa}$ that looks at $g_{i}^{\kappa}$ for $2m+1 \le i \le 2m+n$, i.\ e. at the improving solution, and biases each input node of $G_{C}^{\overline{\kappa}}$ to the color of its corresponding $g_{i}^{\kappa}$. Finally, we have a subgraph that looks at the input nodes of $G_{C}^{0},G_{C}^{1}$, decides whose input results in a greater output w.\ r.\ t. $C$ -- this subgraph is called \textbf{winner} as opposed to the \textbf{loser} which is the other subgraph -- and biases the subgraphs $S_{i}^{\kappa}$ of the winner to behave like NOR-gates and the subgraphs of the loser to take the reset state. Then, we show that the colors of the subgraphs $S_{i}^{\kappa}$ of the winner in fact reflect the correct outputs w.\ r.\ t their inputs and that the input nodes of the loser in fact are indifferent w.\ r.\ t. their neighbors in the subgraphs $S_{i}^{\kappa}$. Then, due to the bias of $T^{\kappa}$, the input nodes of the loser take the colors of the improving neighbor computed by the winner whereafter the loser becomes the new winner. Hence, the improving solutions switch back and forth between the two copies until the colors of the input nodes of both copies are local optima and the copies return their input as improving solution. Then, the colors of the input nodes induce a local optimum of $C$.

Before turning into the details we introduce some notations w.\ r.\ t. $G_{C}$. We let $x_{i}^{\kappa}$ be the input nodes of $G_C^{\kappa}$, $w_{i,1}^{\kappa}:=g_{2m+i}^{\kappa},w_{i,2}^{\kappa}:=g_{2m+n+i}^{\kappa}$ for $1 \le i \le n$, and $\hat{g}_{i}^{\kappa}:=g_{m+i}^{\kappa}$ for $1 \leq i \leq m$. Each subgraph $G_C^{\kappa}$ also contains nodes $y_{i}^{\kappa}$, $z_{i}^{\kappa}$ for $0 \le i \le 2N+1$ and $\lambda_{i}^{\kappa}$ for $1 \le i \le n$ which induce vectors $y^{\kappa}$, $z^{\kappa}$, and $\lambda^{\kappa}$. Moreover, we let $x^{\kappa}$ be the vector of nodes induced by $x_{i}^{\kappa}$ for $1 \le i \le n$.


We will introduce the nodes and edges of $G_{C}$ via so called components. A component of $G_{C}$ is a tupel $(V'_{C},E'_{C})$ with $V'_{C} \subseteq V_{C}$ and $E'_{C} \subseteq E_{C}$. The components of $G_{C}$ have fourteen types: type 1 up to type 14, where we say that the nodes, edges, and weights of the edges of the components have the same types as their corresponding components. We will explicitly state weights for the edges of type 2 up to 7. However, the weights of these components are only stated to indicate the relations between edge weights of the same type. The only edge weights that interleave between two different types are those of type $3$ and $4$. The edges of type $3$ and $4$ are scaled by the same number. For all other types we assume that their weights are scaled such that the weight of an edge of a given type is greater than four times the sum of the weights of the edges of higher types combined. Note that for these types a lower type implies a higher edge weight. To distinguish between the meaning of the explicitly stated edge weights and the final edge weights, i.\ e. the weights resulting by the scale, we will speak of the explicitly stated weights of \textbf{relative edge weights}.

The components of some types are introduced via drawings. In the drawings, the thick black edges and the nodes with black circumcircles are nodes counted among the components of the introduced type. Gray edges and nodes with gray circumcircles are of a different type than the component introduced in the corresponding drawing and are only (re-)drawn to simplify the verification of the proofs for the reader -- in particular the condition that each node is of maximum degree five. If for a gray edge there is no explicit relative weight given then the edge is among the types $8-14.$ If a gray edge is dotted then it is of higher type than the non-dotted gray edges of the same drawing. If a node has a black or a white filling then it is of type 1. These nodes are also (re-)drawn in components of type higher than 1.

\textbf{Type 1} is to provide the constants $0$ and $1$ for the components of higher type. It contains nodes $s,t$ which are connected by an edge with a weight that is greater than the sum of all other edges in $E_{C}$. Assume w.\ l.\ o.\ g. $c(s)=0$ and let $S$ and $T$ be the sets of nodes representing the constants $0$ and $1$. Type 1 looks at $s$ and biases the nodes of $S$ to the color of $s$ and the nodes of $T$ to the opposite. In the following we assume for each constant introduced in components of higher types there is a separate node in the sets $S,T$.

\textbf{Type 2} contains the nodes  $d^{0}, d^{1}, u^{0}, u^{1}$ -- we will see later that $d^{0}$ and $d^{1}$ are comparing nodes -- with edges and relative weights as depicted in Figure \ref{di}. The purpose of these edges is -- together with the edges of type 9 and 10 -- to guarantee that $d^{0}$ and $d^{1}$ are not both black in local optima. The nodes $d^{0}$ and $d^{1}$ are adjacent to many nodes of higher type, and have a degree greater than five.

	\begin{figure}[htb]
		\centering
			\includegraphics[width=0.27\textwidth]{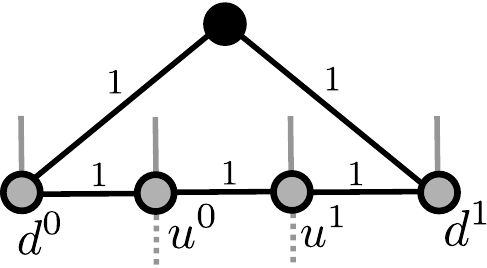}
			\caption{The component of type 2}
			\label{di}
	\end{figure}	
	
The components of type $3$ to $7$ are to represent the two subgraphs $G_C^{0}$ and $G_C^{1}$. The components are very similar to certain clauses of \cite{schaeffer1}. There are three differences between our components and their clauses. First, we omit some nodes and edges to obtain a maximum degree of five for all nodes different from $g_{i}^{\kappa}, I_{1}(g_{i}^{\kappa}),$ and $I_{2}(g_{i}^{\kappa})$. Second, we use different edge weights. However, the weights are manipulated in a way such that the happiness of each node for given colors of the corresponding adjacent nodes is the same as in \cite{schaeffer1}. Third, we add nodes that we bias and to which we look at. Their purpose is to derive the color that a comparing node $g_{i}^{\kappa}$ would have if it was a single node. This color is used to bias $g_{i}^{\kappa}$ such that Theorem \ref{substitute} implies either $col(g_{i}^{\kappa})=0$ or $col(g_{i}^{\kappa})=1$.

\textbf{Type 3} consists of subgraphs $S_{i}^{\kappa}$ which are to represent the gates $G_{i}$ of $C$. For gates whose inputs are not inputs of $G_C^{\kappa}$ they are depicted in Figure \ref{fig:gates}. Together with $d^{0}$ and  $d^{1}$, the nodes $g_{i}^{\kappa}$ (and $I_{k}(g_{j}^{\kappa})$ respectively) are the only nodes which have a degree greater than five -- we will see later that they are also comparing. For each gate $G_{i}^{\kappa}$ whose inputs are inputs of $G_C^{\kappa}$ we take the same components as for those gates whose inputs are not inputs of $G_C^{\kappa}$ but make the following adjustment. We omit the edges $\{I_{1}(g_{i}^{\kappa}),0\}$ and $\{I_{2}(g_{i}^{\kappa}),1\}$ and subtract their relative weights from the edges $\{I_{1}(g_{i}^{\kappa}),1\}$ and $\{I_{2}(g_{i}^{\kappa}),0\}$ respectively, i.\ e.  their relative weights are $2^{10i+7}-2^{10i-5}$ and $2^{10i-5}-2^{10i-1}$. Note that the adjustment does not change the happiness of the nodes $I_{1}(g_{i}^{\kappa})$ and $I_{2}(g_{i}^{\kappa})$ for any given colors of themselves and their neighbors. We call the edges $\{g_{i}^{\kappa},u_{i,j}^{\kappa}\}$ for $j \in \{2,3,6,7,10,11\}$ \textbf{corresponding to} $g_{i}^{\kappa}.$

			\begin{figure}[h!t!b!]
				\centering
				\subfigure{%
					\label{fig:I1(gi)}%
				\includegraphics[width=0.55\textwidth]{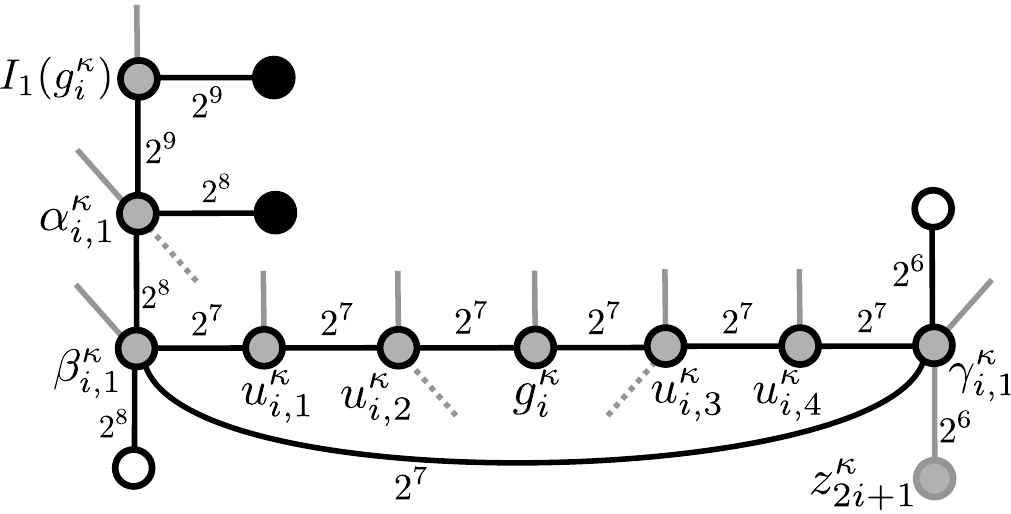}}
				\subfigure{%
				\qquad
					\includegraphics[width=0.55\textwidth]{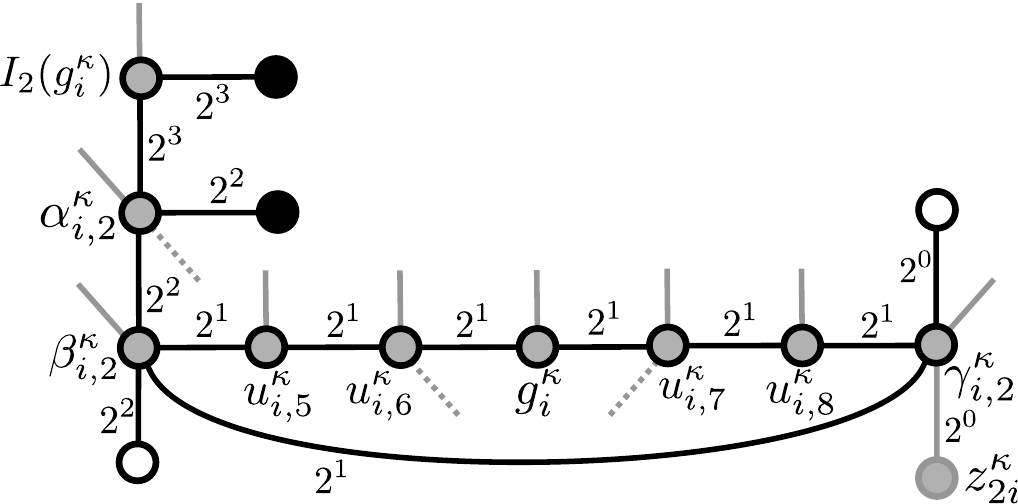}%
					\label{fig:I3(gi)}\qquad}
				\subfigure{%
				\qquad
					\includegraphics[width=0.55\textwidth]{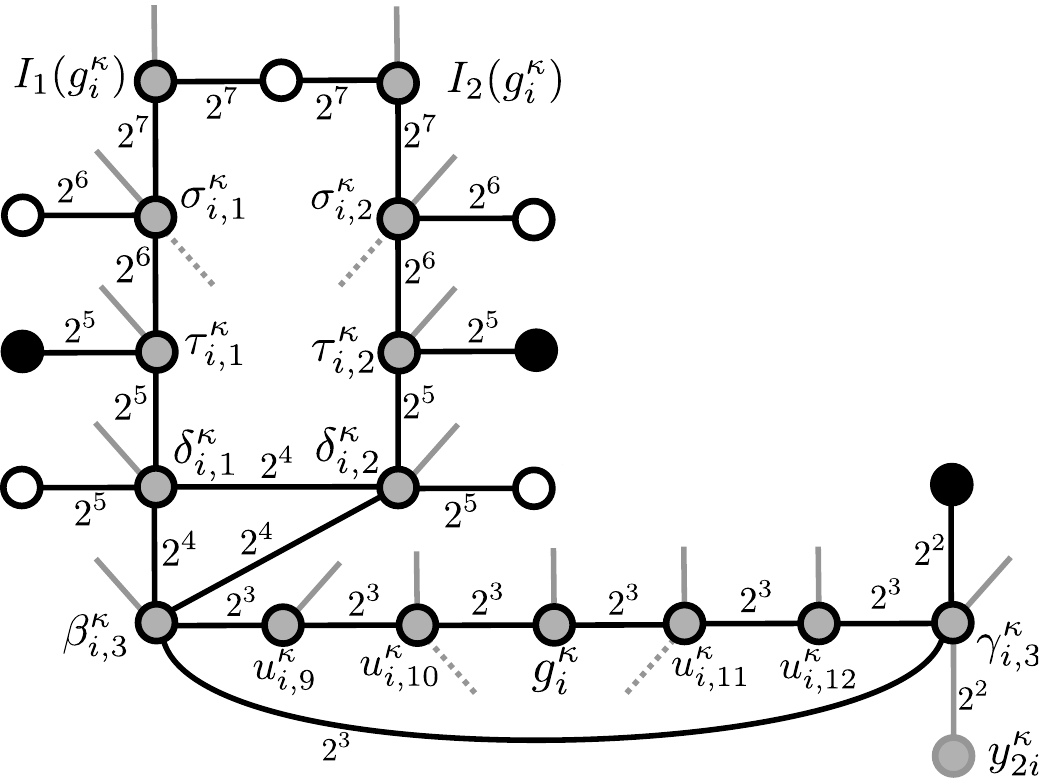}%
					\label{fig:I2(gi)}\qquad}
				\caption{The components of type 3; extra factor for relative edge weights: $2^{10i-2}$}
				\label{fig:gates}
			\end{figure}			
			
\textbf{Type 4} (Figure \ref{yz}) checks whether the outputs of the gates represented by the components of type 3 are correct and gives incentives to nodes of other components depending on the result. As in \cite{schaeffer1} we say that the \textbf{natural value} of the nodes $y_{i}^{\kappa}$ is $1$ and the natural value of the nodes $z_{i}^{\kappa}$ is $0$. The nodes $y_{N+1}^{\kappa},z_{N+1}^{\kappa},\ldots,y_{2}^{\kappa},z_{2}^{\kappa}$ check the correct computation of the corresponding gates and give incentives to their corresponding gates depending on whether the previous gates are correct. The nodes $y_{1}^{\kappa},z_{1}^{\kappa},y_{0}^{\kappa},z_{0}^{\kappa}$ are to give incentives to $d^0,d^1$ depending on whether all gates are correct. Recall that the weights of the edges of type $4$ are the only weights that interleave with weights of edges of a higher type, namely with those of type $3$.
	
			\begin{figure}[htb]
				\centering
				\subfigure{%
					\label{fig:yizi}%
				\includegraphics[width=0.92\textwidth]{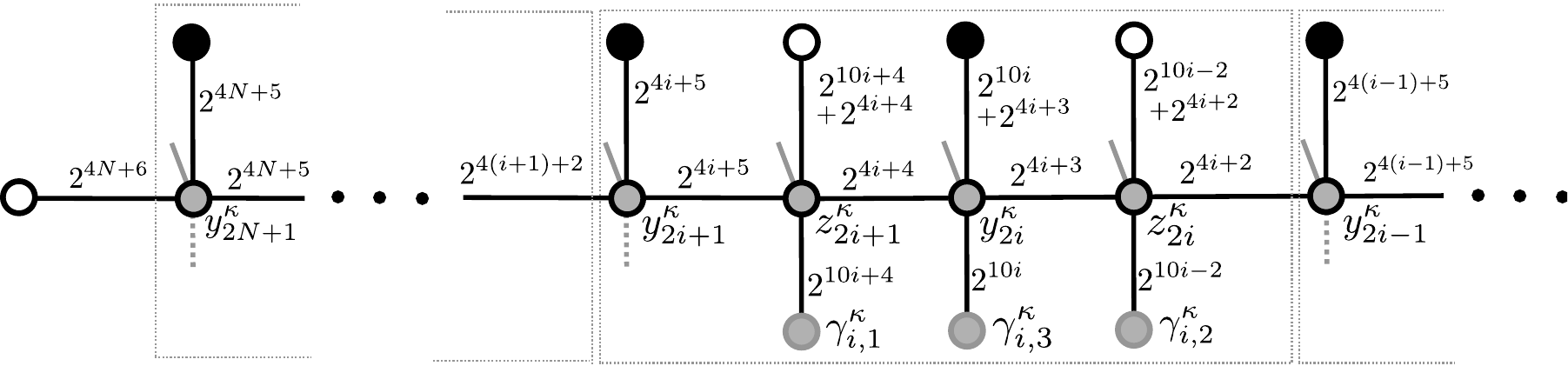}}
				\subfigure{%
				\qquad
					\includegraphics[width=0.59\textwidth]{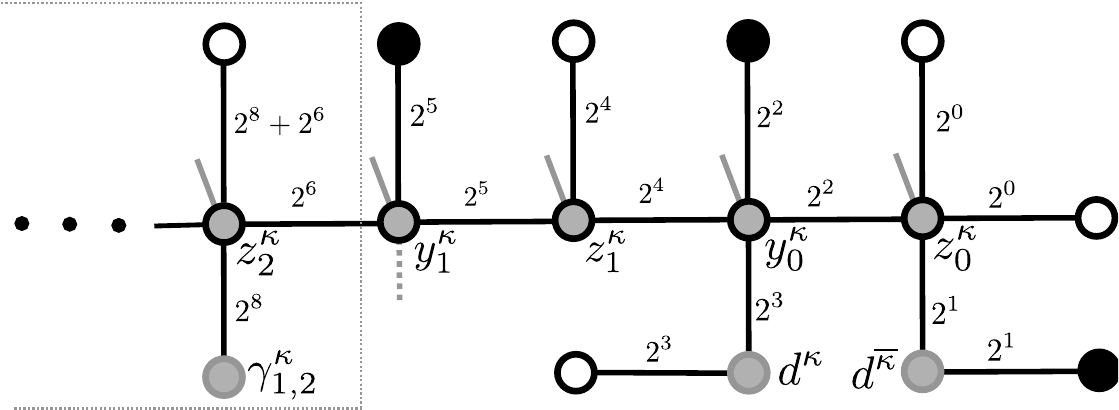}%
					\label{fig: yzende}\qquad}
				\caption{The components of type 4}
				\label{yz}
			\end{figure}		
			
\textbf{Type 5} contains the nodes and edges as depicted in Figure \ref{fig:outputs} for $1 \le i \le m$ and edges $\{1,d^0\}$, $\{1,d^0\}$, $\{0,d^1\}$, and $\{0,d^1\}$ of relative weight $1$ -- these edges are not depicted. The aim of the component is twofold. On the one hand it is to incite that one of the nodes $d^0$ and $d^1$ to become black for which the output of the corresponding copy $G_{C}^{0}$ and $G_{C}^{1}$ is smaller and the other one to become white. On the other hand, the edges $\{1,d^0\}$, $\{1,d^0\}$, $\{0,d^1\}$, and $\{0,d^1\}$ are to break the tie in favor of $G_C^0$ if the outputs of $G_C^0$ and $G_C^1$ are equal.

	\begin{figure}[htb]
		\centering
			\includegraphics[width=0.55\textwidth]{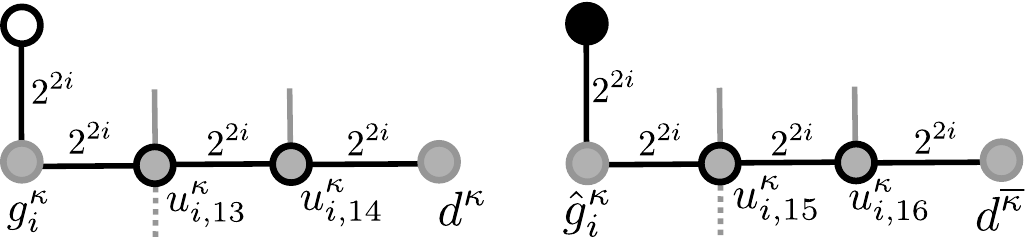}
			\caption{One part of the component of type 5}
			\label{fig:outputs}
	\end{figure}	

\textbf{Type 6} contains nodes $\hat{d}_{i}^{\kappa}$ for all $1 \leq i \leq n$ with incident edges $\{\hat{d}_{i}^{\kappa},d^{\kappa}\}$ of relative weight $2^{2i}$. These edges are to ensure that $col(d^{\kappa}) \neq c(\hat{d}^{\kappa}_{i})$ for all $i$. The component also contains $n$ edges $\{1,d^{\kappa}\}$ with relative weights $2^{2i}$ for all $1 \leq i \leq n$ -- recall that each constant is represented by a separate node of type $1$. These edges are needed for $d^{\kappa}$ to be a comparing node.
			
\textbf{Type 7} (Figure \ref{resetter}) is to incite the input nodes of $G_C^{\overline{\kappa}}$ to take the color corresponding to the better neighbor computed by $G_C^{\kappa}$ if $col(d^{\kappa})=0$. As we will see in Lemma \ref{resetterlemma} the node $\lambda_{i}^{\overline{\kappa}}$ has the same color as $w_{i,1}^{\kappa}$ if $col(w_{i,1}^{\kappa})=col(w_{i,2}^{\kappa})$ and $col(d^{\kappa})=0$. Moreover, we will see in the same Lemma that $\lambda_{i}^{\overline{\kappa}}$ has the opposite color as $\mu_{i}^{\overline{\kappa}}$ in any local optima. Therefore the nodes $\lambda_{i}^{\overline{\kappa}}$ and $\mu_{i}^{\overline{\kappa}}$ together with their incident edges, in the case that $w_{i,1}^{\kappa}=w_{i,2}^{\kappa}$ and $col(d^{\kappa})=0$, have the functionality of a subgraph $T^{\kappa}$ that looks at the nodes $w_{i,1}^{\kappa}$ and biases the input nodes of $G_{C}^{\overline{\kappa}}$ to take the color of their corresponding $w_{i,1}^{\kappa}$. Concerning the maximum degree of five recall that the number of edges of type 3 incident to $x^{\overline{\kappa}}$ was three due to the adjustment. One edge of type 7 is incident to $x^{\overline{\kappa}}$ and one edge of higher type -- depicted as a gray edge in Figure \ref{resetter} -- is incident to $x^{\overline{\kappa}}$. Thus, $x^{\overline{\kappa}}$ has a degree of five.

	\begin{figure}[htb]
		\centering
			\includegraphics[width=0.43\textwidth]{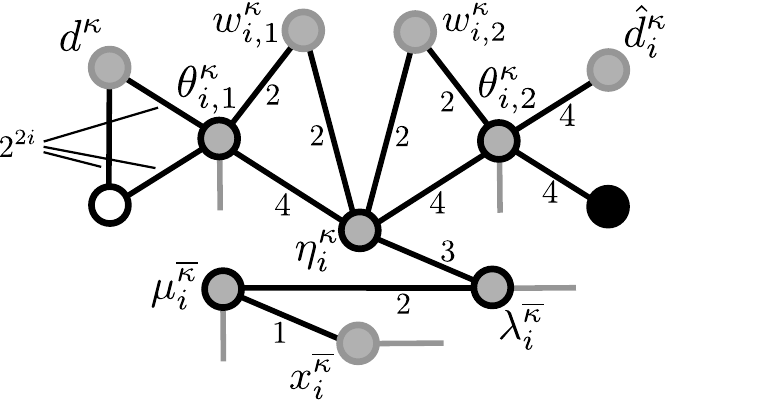}
			\caption{The components of type 7}
			\label{resetter}
	\end{figure}

The components of type 8 to 14 are subgraphs that look at certain nodes and bias other nodes. No node to which any component looks at is a comparing node. Therefore, all of them must be of degree at most five in our construction. But to some of these nodes more than one component looks at. To maintain a maximum degree of five for these nodes, we assume that the component of the lowest type which looks at such a node $v$ not only biases the nodes of which we state that it biases them but also biases extra nodes $v'_1,\ldots,v'_k$, for $k \in \N$ great enough, to have the same color as $v$ and the components of higher types look at $v'_1,\ldots,v'_k$ instead of the original nodes. 
	
\textbf{Type 8} looks at the vectors $x^{0}, x^{1}$ of nodes representing the inputs of $G^{C}_{0}$ and $G^{C}_{1}$ and at the vectors $\lambda^{0}$, $\lambda^{1}$ of nodes of type $7$ and biases the vectors $y^{0}$, $z^{0}$, $y^{1}$, and $z^{1}$ in the following way. The nodes $y_{i}^{0},z_{i}^{0}$ for all $0 \le i \le 2N+1$ are biased to their unnatural value, as defined in type $4$, if $C(x^{0}) < C(x^{1})$, $w(x^{1}) \ne c(x^{0})$, and $w(x^{1}) = c(\lambda^{0})$ and to their natural value otherwise. Similarly, $y_{i}^{1},z_{i}^{1}$ are biased to their unnatural value if $C(x^{0}) \geq C(x^{1})$, $w(x^{0}) \ne c(x^{1})$, and $w(x^{0}) = c(\lambda^{1})$ and to their natural value otherwise. The comparison between $C(x^{0})$ and $C(x^{1})$ is used to decide which circuit is the winner and which one is the loser and the consideration of the other colors is to avoid certain troublemaking local optima.

The idea behind the next two components is as follows. In any local optimum, we want for the nodes $d^{0}$ and $d^{1}$ at most one to be black. The immediate idea to reach this would be to use a simple edge between them in the component of type 2 (see Figure \ref{di}) without the intermediate nodes $u^{0}$ and $u^{1}$. To show -- later in the proof -- that a comparing node $d^{\kappa}$ has a certain color, we want to apply Theorem \ref{substitute}. For this, we need to know the colors of the neighbors adjacent to $d^{\kappa}$ via the edges of the highest weight, which includes the color of $d^{\overline{\kappa}}$. But argue about the color of $d^{\overline{\kappa}}$ via Theorem \ref{substitute} analogously needs the information about the color of $d^{\kappa}$. To solve this problem, we introduce the intermediate nodes $u^{0}$ and $u^{1}$, bias them appropriately and use their colors to bias $d^{0}$ and $d^{1}$.

\textbf{Type 9} looks at $y_1^{0}$, $y_1^{1}$, and at the vectors $x^{0}$ and $x^{1}$ and biases $u^{0}$ and $u^{1}$ as follows. If $C(x^{0}) \geq C(x^{1})$ then it biases $u^{0}$ to the color of $y_1^{0}$ and $u^{1}$ to the opposite. Otherwise it biases $u^{1}$ to the color of $y_1^{1}$ and $u^{0}$ to the opposite.

\textbf{Type 10} looks at $u^{0}$, $u^{1}$, $y_1^{0}$, $y_1^{1}$, and at the vectors $x^{0}$ and $x^{1}$ and biases $d^{0}$ and $d^{1}$ as follows. If $c(y_1^{0})=c(y_1^{1})=0$ then $d^{0}$ is biased to the color of $u^{1}$ and $d^{1}$ to the color of $u^{0}$. If $c(y_1^{0}) \neq c(y_1^{1})$ then $d^{0}$ is biased to the color of $y_1^{1}$ and $d^{1}$ to the opposite. If $c(y_1^{0})=c(y_1^{1})=1$ then we distinguish two cases. If $C(x^{0}) \geq C(x^{1})$ then $d^{0}$ is biased to $0$ and $d^{1}$ to $1$, otherwise $d^{0}$ to $1$ and $d^{1}$ to $0$.

\textbf{Type 11} is to bias the nodes of type 3 to certain preferred colors depending on whether $y_{2i+1}^{\kappa}$ has its natural value. If it has its natural value then it biases the subgraph $S_{i}^{\kappa}$ to colors which reflect the behavior of a NOR-gate for $S_{i}^{\kappa}$ and otherwise it biases them such that the input nodes $I_{1}(g_{i}^{\kappa})$ and $I_{2}(g_{i}^{\kappa})$ are indifferent with respect to their neighbors in $S_{i}^{\kappa}$, i.\ e. the nodes of $S_{i}^{\kappa}$ are biased to their reset state. In particular the component looks at $y_{2i+1}^{\kappa}$ for $1 \leq i \leq N$ and biases $\alpha_{i,1}^{\kappa},$ $\alpha_{i,2}^{\kappa},$ $\gamma_{i,1}^{\kappa},$ $\gamma_{i,2}^{\kappa},$ $\beta_{i,3}^{\kappa},\tau_{i,1}^{\kappa},$ and $\tau_{i,2}^{\kappa}$ to the color of $y_{2i+1}^{\kappa}$ and $\beta_{i,1}^{\kappa},\beta_{i,2}^{\kappa},\gamma_{i,3}^{\kappa},\sigma_{i,1}^{\kappa},\sigma_{i,2}^{\kappa},\delta_{i,1}^{\kappa},$ and $\delta_{i,2}^{\kappa}$ to the opposite. 

The aim of the next two components is as follows. We want to bias the comparing nodes $g_i^{\kappa}$ such that we can apply Theorem \ref{substitute} to obtain either $col(g_i^{\kappa})=1$ or $col(g_i^{\kappa})=0$. To reach this, we need to know the colors of the nodes adjacent to $g_i^{\kappa}$. For this purpose we introduce -- similarly as in the component of type 2 -- extra nodes $u_{i,j}^{\kappa}$, bias them appropriately and use their colors instead. 

\textbf{Type 12} looks at $y_{2i+1}^{\kappa},y_{2i-1}^{\kappa},\alpha_{i,1}^{\kappa},$ and $\alpha_{i,2}^{\kappa}$ and biases $u_{i,1}^{\kappa},$ $u_{i,3}^{\kappa},$ $u_{i,5}^{\kappa},$ $u_{i,7}^{\kappa},$ $u_{i,10}^{\kappa},$ $u_{i,12}^{\kappa}$ to white and $u_{i,2}^{\kappa},u_{i,4}^{\kappa},u_{i,6}^{\kappa},u_{i,8}^{\kappa},u_{i,9}^{\kappa},u_{i,11}^{\kappa}$ to black if $c(y_{2i+1}^{\kappa})=c(y_{2i-1}^{\kappa})$. Otherwise, $u_{i,3}^{\kappa},$ $u_{i,4}^{\kappa},$ $u_{i,7}^{\kappa},$ $u_{i,8}^{\kappa},$ $u_{i,11}^{\kappa},$ $u_{i,12}^{\kappa}$ are biased to their respective opposite and the biases of the remaining nodes split into the following cases. Node $u_{i,1}^{\kappa}$ is biased to $c(\alpha_{i,1}^{\kappa})$ and $u_{i,2}^{\kappa}$ to the opposite. Similarly, $u_{i,5}^{\kappa}$ is biased to $c(\alpha_{i,2}^{\kappa})$ and $u_{i,6}^{\kappa}$ to the opposite. Finally, $u_{i,9}^{\kappa}$ is biased to $c(\alpha_{i,2}^{\kappa}) \wedge c(\alpha_{i,2}^{\kappa})$ and $u_{i,10}^{\kappa}$ to the opposite.

\textbf{Type 13} looks for all $1 \le i \le m$ at $y_{2i-1}^{\kappa},\alpha_{i,1}^{\kappa},$ and $\alpha_{i,2}^{\kappa}$ and biases $u_{i,14}^{\kappa}$ to $c(y_{2i-1}^{\kappa}) \wedge c(\alpha_{i,1}^{\kappa}) \wedge c(\alpha_{i,2}^{\kappa})$ and $u_{i,13}^{\kappa}$ to the opposite. Similarly, it looks for all $m+1 \le i \le 2m$ at $y_{2i-1}^{\kappa},\alpha_{i,1}^{\kappa},$ and $\alpha_{i,2}^{\kappa}$ and biases $u_{i,15}^{\kappa}$ to $c(y_{2i-1}^{\kappa}) \wedge (\neg c(\alpha_{i,1}^{\kappa}) \vee \neg c(\alpha_{i,2}^{\kappa}))$ and $u_{i,16}^{\kappa}$ to the opposite. 

\textbf{Type 14} looks at all nodes of type lower than $14$ that are adjacent to $g_i^{\kappa}$ with the single exception of $\eta_{i}^{\kappa}$ if $g_i^{\kappa}=w_{j,k}^{\kappa}$. Namely, it looks at $u_{i,4j+2}^{\kappa},u_{i,4j+3}^{\kappa}$ for $0 \le j \le 2$, at $\alpha_{j,k}^{\kappa}$ and $\sigma_{j,k}^{\kappa}$ if $I_k(g_j)^{\kappa}=(g_i^{\kappa})$ for $k \in \{1,2\}$, at $u_{i,13}$ if $i \le m$, at $u_{i,15}$ if $m+1 \le i \le 2m$. Furthermore, it looks at $\mu_{i}^{\kappa}$ if $g_i^{\kappa}=w_{j,k}^{\kappa}$, at $\alpha_{i,1}^{\kappa}$, and $\alpha_{i,2}^{\kappa}$. The component treats the color of $\mu_{i}^{\kappa}$ as if it was the color of $\eta_{i}^{\kappa}$ if $g_i^{\kappa}=w_{j,k}^{\kappa}$ -- the component cannot look at $\eta_{i}^{\kappa}$ since $\eta_{i}^{\kappa}$ would have a degree of six in this case but we will see in Lemma \ref{resetterlemma} that $c(\eta_{i}^{\kappa}) = c(\mu_{i}^{\overline{\kappa}})$ in any local optimum. Then, the component computes whether $g_i^{\kappa}$ is weakly indifferent and the color $\rho \in \{0,1\}$ it would have if it was a single node and not weakly indifferent. It biases $g_{i}^{\kappa}$ to $\rho$ if it is not weakly indifferent. If $g_{i}^{\kappa}$ is weakly indifferent then it biases $g_{i}^{\kappa}$ to $c(\alpha_{i,1}^{\kappa}) \wedge c(\alpha_{i,2}^{\kappa})$. This finishes the description of $G_C.$

Now we consider the colors of the nodes of $G_{C}$ in an arbitrary local optimum. All of the remaining Lemmas have an inherent statement ``for any local optimum $P$''. We call a gate $g_{i}^{\kappa}$ \textbf{correct} if $col(g_{i}^{\kappa}) = \neg (col(I_{1}(g_{i}^{\kappa})) \vee col(I_{2}(g_{i}^{\kappa}))$. In the following we will, among other things, argue about the colors of the comparing nodes $v \in V^{C}$ in $P$. We do this by naming the decisive neighbors of $v$, their colors, and the color to which $v$ is biased. Then, we can deduce the color of $v$ via Theorem \ref{substitute} -- recall that a necessary condition of Theorem \ref{substitute} is that $v$ is biased to the opposite color as the color of its decisive neighbors if $v$ is not weakly indifferent. The following Lemmas characterize properties of some components. 

\begin{lemma}\label{compareuuneq}
$d^{0},d^{1}$ and $g_{i}^{\kappa}$ for any $1 \leq i \leq N-n$, $\kappa \in \{0,1\}$ are comparing nodes. Either $col(g_{i}^{\kappa})=1$ or $col(g_{i}^{\kappa})=0$ for all $1 \le i \le N$. Moreover, $c(u^{0}) \neq c(u^{1}).$
\end{lemma}

\begin{proof}
In Table \ref{comptable} we name all nodes adjacent to $d^0,d^1$, and $g_i^{\kappa}$ for all $1 \le i \le N-n$, $\kappa \in \{0,1\}$ and the weigths of the corresponding edges. By means of the table it can easily be verified that the aforementioned nodes are comparing.\newline

\begin{table}[h!b!p!]
\centering
\begin{tabular}{|c||c|c|c|c|}
\hline
Node & Neighbor & Type & R. Weight & Condition\\ \hline
\multirow{15}{*}{$d^{\kappa}$} & $1$ & \multirow{2}{*}{$2$} & \multirow{2}{*}{$1$} & \\
 & $u^{\kappa}$ & & & \\ \cline{2-4}
 & $0$ & \multirow{4}{*}{$4$} & \multirow{2}{*}{$2^{3}$} & \\
 & $y^{\kappa}_0$ & & & \\ \cline{2-2}\cline{4-4}
 & $1$ & & \multirow{2}{*}{$2^{1}$} &\\
 & $z^{\overline{\kappa}}_0$ & & &\\ \cline{2-5}
 & $u^{\kappa}_{i,14}$ & \multirow{4}{*}{$5$} & \multirow{2}{*}{$2^{2i}$} & \multirow{2}{*}{$1 \le i \le m$}\\
 & $u^{\overline{\kappa}}_{i,16}$  & & &\\  \cline{2-2}\cline{4-5}
 & $\kappa$ & & \multirow{2}{*}{$1$} & \\
 & $\kappa$ & & & \\ \cline{2-5}
 & $1$ & \multirow{2}{*}{$6$} & \multirow{2}{*}{$2^{2i}$} & \multirow{2}{*}{$1 \le i \le n$}\\
 & $\hat{d}_{i}^{\kappa}$ & & & \\ \cline{2-5}
 & $\theta^{\kappa}_{i,1}$ & \multirow{2}{*}{$7$} & \multirow{2}{*}{$2^{2i}$} & \multirow{2}{*}{$1 \le i \le m$}\\
 & $0$ & & &\\ 
\cline{2-5}
 & no name & $10$ & $1$ &\\
\hline
\end{tabular}
\hspace{5mm}
\begin{tabular}{|c|c||c|c|c|c|}
\hline
Node & Condition & Neighbor & Type & R. Weight & Condition\\ \hline
\multirow{23}{*}{$g_i^{\kappa}$} &\multirow{6}{*}{$1 \le i \le N-n$} & $u_{i,2}$ & \multirow{14}{*}{$3$} & \multirow{2}{*}{$2^{10i+5}$} &\\
 & & $u_{i,3}$ & & &\\ \cline{3-3}\cline{5-5}
 & & $u_{i,10}$ & & \multirow{2}{*}{$2^{10i+1}$} &\\
 & & $u_{i,11}$ & & &\\ \cline{3-3}\cline{5-5}
 & & $u_{i,6}$ & & \multirow{2}{*}{$2^{10i-1}$} &\\
 & & $u_{i,7}$ & & &\\ \cline{2-3}\cline{5-6}
 & & $1$ & & \multirow{2}{*}{$2^{10j+7}$} & \multirow{4}{*}{$I_1(g_j)=g_i$}\\
 & & $\alpha_{j,1}^{\kappa}$ & & &\\ \cline{3-3}\cline{5-5}
 & & $0$ & & \multirow{2}{*}{$2^{10j+5}$} &\\
 & $2(n+m)+1$ & $\sigma_{j,1}^{\kappa}$ & & &\\ \cline{3-3}\cline{5-6}
 & $\le i \le N-n$ & $0$ & & \multirow{2}{*}{$2^{10j+5}$} & \multirow{4}{*}{$I_2(g_j)=g_i$}\\
 & & $\sigma_{j,2}^{\kappa}$ & & &\\ \cline{3-3}\cline{5-5}
 & & $1$ & & \multirow{2}{*}{$2^{10j-1}$} &\\
 & & $\alpha_{j,2}^{\kappa}$& & &\\ \cline{2-6}
 & \multirow{2}{*}{$1 \le i \le m$} & $0$ & \multirow{4}{*}{$5$} & \multirow{4}{*}{$2^{2i}$} & \\
 & & $u_{i,13}^{\kappa}$& & &\\ \cline{2-3}
 & \multirow{2}{*}{$m+1 \le i \le 2m$} & $1$ & & & \\
 & & $u_{i,15}^{\kappa}$& & &\\ \cline{2-6}
 & $2m+n+1 \le$ & $\theta_{j,2}^{\kappa}$ & \multirow{4}{*}{$7$} & \multirow{4}{*}{$2$} & \multirow{4}{*}{$j:=i-2m-n$}\\
 & $i \le 2(m+n)$ & $\eta_{j}^{\kappa}$ & & &\\ \cline{2-3}
 & $2m+1 \le$ & $\theta_{j,1}^{\kappa}$ & & &\\
 & $i \le 2m+n$ & $\eta_{j}^{\kappa}$ & & &\\ \cline{2-6}
 & & no name & $14$ & $1$ &\\
\hline
\end{tabular}
\caption{Neighborhood of the nodes $d^0, d^1$, and $g_i^{\kappa}$ for $1 \le i \le N-n$, $\kappa \in \{0,1\}.$}
\label{comptable}
\end{table}

Now consider the nodes $g_{i}^{\kappa}$. Recall first that Theorem \ref{substitute} only applies to local optima in which the comparing node is biased to the color that it had if it was a single node. The only nodes different from the constants that are incident to any $g_{j}^{\kappa}$ and to which the components of type $14$ do not look at is $\mu_{i}^{\overline{\kappa}}$ for $g_{j}^{\kappa}=w_{i,k}^{\kappa}$ and any $k \in \{1,2\}$. From Lemma \ref{resetterlemma} we know that $c(\eta_{i}^{\kappa}) = c(\mu_{i}^{\overline{\kappa}})$. Thus, the component of type $14$ correctly decides whether $g_i^{\kappa}$ is weakly indifferent as outlined in the description of type $14$ and therefore it biases $g_i^{\kappa}$ such that Theorem \ref{substitute} implies that either $col(g_{i}^{\kappa})=1$ or $col(g_{i}^{\kappa})=0$ for all $1 \le i \le N$.

Due to the weights of the edges incident to $u^0$ and $u^1$ and since they are biased to different colors by type $9$ in each local optimum at least one of them is unhappy if both have the same color. Thus, the claim follows.
\end{proof}

\begin{lemma}[similar to Claims 5.9.B and 5.10.B in \cite{schaeffer1}]\label{resetterlemma}
If $col(d^{\kappa}) = 1$ then neither flipping $w_{i,1}^{\kappa}$ nor $w_{i,2}^{\kappa}$ change the cut by a weight of type $7$. If $col(d^{\kappa})=0$ and $col(w_{i,1}^{\kappa})=col(w_{i,2}^{\kappa})$ then $col(w_{i,1}^{\kappa}) \neq c(\eta_{i}^{\kappa})$. Moreover, $c(\eta_{i}^{\kappa}) \ne c(\lambda_{i}^{\overline{\kappa}}) \ne c(\mu_{i}^{\overline{\kappa}}).$
\end{lemma}

\begin{proof}

The proof uses the following claim.

\begin{claim}\label{uneqlemma}
If $col(d^{\kappa}) = \rho$ for $\rho \in \{0,1\}$ then $c(\hat{d}^{\kappa}_i)=\overline{\rho}$ for all $1 \leq i \leq n$.
\end{claim}
\begin{proof}
There are three edges incident to each node $\hat{d}^{\kappa}_i$ as introduced in type $6$. Namely, one edge of type $6$ and two edges of type $7$. Since the weight of the edge of type $6$ is greater than the sum of all edges of higher type, in particular the two edges of type $7$, the claim follows.
\end{proof}

Assume $col(d^{\kappa})=1$. Then, by Claim \ref{uneqlemma} we have $c(\hat{d}^{\kappa}_{i})=0$ for all $i$. Since $col(d^{\kappa})=1$, the weights of the five edges incident to $\theta_{i,1}^{\kappa}$ as depicted in Figure \ref{resetter} imply $c(\theta_{i,1}^{\kappa}) \neq c(\eta_{i}^{\kappa})$. Similarly, we can argue that $c(\theta_{i,2}^{\kappa}) \neq c(\eta_{i}^{\kappa})$. But then, neither a flip of $w_{i,1}^{\kappa}$ nor a flip of $w_{i,2}^{\kappa}$ can change the cut by a weight of type $7$.

Now assume $col(d^{\kappa})=0$ and $col(w_{i,1}^{\kappa})=col(w_{i,2}^{\kappa})$. Due to Claim \ref{uneqlemma} we have $c(\hat{d}^{\kappa}_{i})=1$ for all $i$. The weights of the edges incident to $\theta_{i,1}^{\kappa}$ and $\theta_{i,2}^{\kappa}$ imply $c(\theta_{i,1}^{\kappa})=1$ and $c(\theta_{i,2}^{\kappa})=0$. Since $col(w_{i,1}^{\kappa})=col(w_{i,2}^{\kappa})$ and $c(\theta_{i,1}^{\kappa}) \neq c(\theta_{i,2}^{\kappa})$, node $\eta_i^{\kappa}$ is happy if and only if its color is different from the color of $w_{i,1}^{\kappa}$ and $w_{i,2}^{\kappa}$.

Finally, the claim $c(\eta_{i}^{\kappa}) \ne c(\lambda_{i}^{\overline{\kappa}}) \ne c(\mu_{i}^{\overline{\kappa}})$ follows directly from the weights of the edges incident to $\lambda_{i}^{\overline{\kappa}}$ and $\mu_{i}^{\overline{\kappa}}$.
\end{proof}

\begin{lemma}[similar to Lemma 4.1H in \cite{schaeffer1}]\label{propagation}
If $c(z_{j}^{\kappa})=1$ then $c(y_{j-1}^{\kappa})=0$. If $c(y_{j}^{\kappa})=0$ then $c(y_{p}^{\kappa})=0$ and $c(z_{p}^{\kappa})=1$ for all $p \le j$.
\end{lemma}

\begin{proof}
The sum of the weights of the edges $\{z_{j}^{\kappa},y_{j-1}^{\kappa}\}$ and $\{y_{j-1}^{\kappa},1\}$ is greater than the sum of all other edges incident to $y_{j-1}^{\kappa}$. Thus, if $c(z_{j}^{\kappa})=1$ then $c(y_{j-1}^{\kappa})=0.$ Similarly, we can argue that $z_{p}^{\kappa}$ has its unnatural value if $y_{p}^{\kappa}$ has its unnatural value. Therefore, the claim follows by induction.
\end{proof}

\begin{lemma}[similar to Lemma 4.1 in \cite{schaeffer1}]\label{incorrectnesslemma}
If $g_{i}^{\kappa}$ is not correct then $c(z_{2i}^{\kappa})=1$.  
\end{lemma}

\begin{proof} The proof uses the following claims.

\begin{claim}\label{ZNatErgoYNat}
If $c(z_{2i}^{\kappa})=0$ then $c(y_{2i-1}^{\kappa})=1$.
\end{claim}
\begin{proof}
Assume $c(z_{2i}^{\kappa})=c(y_{2i-1}^{\kappa})=0$. If $y_{2i-1}^{\kappa}$ is biased to black by the component of type $8$ then $c(y_{2i-1}^{\kappa})=1$ since $c(z_{2i}^{\kappa})=0$ which is a contradiction. Thus, $y_{2i-1}^{\kappa}$ is biased to $0$. Since $z_{2i}^{\kappa}$ and $y_{2i-1}^{\kappa}$ are biased to opposite colors by type $8$, node $z_{2i}^{\kappa}$ is biased to $1$. Due to the weight of its incident edges it cannot be white then. But this is a contradiction.
\end{proof}

\begin{claim}\label{nr1} If $col(I_1(g_i^{\kappa}))=1$ then $c(\alpha_{i,1}^{\kappa})=0$ and $c(\beta_{i,1}^{\kappa})=1$. If $col(I_2(g_i^{\kappa}))=1$ then $c(\alpha_{i,2}^{\kappa})=0$ and $c(\beta_{i,2}^{\kappa})=1$.
\end{claim}
\begin{proof}
If $col(I_1(g_i^{\kappa}))=1$ then $c(\alpha_{i,1}^{\kappa})=0$ since the edges $\{I_1(g_i^{\kappa}),\alpha_{i,1}^{\kappa}\}$ and $\{\alpha_{i,1}^{\kappa},1\}$ combined weigh more than the sum of all other edges incident to $\alpha_{i,1}^{\kappa}$. Analogously, $c(\alpha_{i,1}^{\kappa})=0$ implies $c(\beta_{i,1}^{\kappa})=1$. The argumentation for the second part of the claim is similar.
\end{proof}

\begin{claim}\label{nr2} If $col(I_1(g_i^{\kappa}))=1$ and $col(g_{i}^{\kappa})=1$ then $c(z_{2i}^{\kappa})=1$. If $col(I_2(g_i^{\kappa}))=1$ and $col(g_{i}^{\kappa})=1$ then $c(z_{2i}^{\kappa})=1$.
\end{claim}
\begin{proof}
Assume for the sake of contradiction $col(I_1(g_i^{\kappa}))=1$, $col(g_{i}^{\kappa})=1$, but $c(z_{2i}^{\kappa})=0$.  Claim \ref{ZNatErgoYNat} implies $c(y_{2i-1}^{\kappa})=1$ since $c(z_{2i}^{\kappa})=0$. Moreover, Lemma \ref{propagation} implies $c(y_{2i+1}^{\kappa})=1$ since $c(z_{2i}^{\kappa})=0$. Thus, $c(y_{2i+1}^{\kappa})=c(y_{2i-1}^{\kappa})$ and therefore the nodes $u_{i,3}^{\kappa}$ and $u_{i,4}^{\kappa}$ are biased to $0$ and $1$ respectively by the component of type $12$. Then, $c(u_{i,3}^{\kappa})=0$ and therefore $c(u_{i,4}^{\kappa})=1$. Then, Claim \ref{nr1} implies $c(\gamma_{i,1}^{\kappa})=0$ and thereafter $c(z_{2i+1}^{\kappa})=1$. Then, $c(z_{2i}^{\kappa})=1$ due to Lemma \ref{propagation} which is a contradiction. The proof for $col(I_2(g_i^{\kappa}))=1$ is analogous.
\end{proof}

\begin{claim}\label{nr4} If $col(I_1(g_i^{\kappa}))=0$ then $c(\delta_{i,1}^{\kappa})=1$. If $col(I_2(g_i^{\kappa}))=0$ then $c(\delta_{i,2}^{\kappa})=1$. 
\end{claim}
\begin{proof}
If $col(I_1(g_i^{\kappa}))=0$ then $c(\sigma_{i,1}^{\kappa})=1$ since the edges $\{I_1(g_i^{\kappa}),\sigma_{i,1}^{\kappa}\}$ and $\{\sigma_{i,1}^{\kappa},0\}$ combined weigh more than the sum of all other edges incident to $\sigma_{i,1}^{\kappa}$. Similarly, it follows that $c(\tau_{i,1}^{\kappa})=0$ and $c(\delta_{i,1}^{\kappa})=1$. Analogously, $col(I_2(g_i^{\kappa}))=0$ implies $c(\delta_{i,2}^{\kappa})=1$.
\end{proof}

\begin{claim}\label{nr5} If $col(I_1(g_i^{\kappa}))=col(I_2(g_i^{\kappa}))=0$ then $c(\beta_{i,3}^{\kappa})=0$. 
\end{claim}
\begin{proof}
Due to Claim  \ref{nr4}, $c(\delta_{i,1}^{\kappa})=c(\delta_{i,2}^{\kappa})=1$. Since the sum of the weights of the edges $\{\beta_{i,3}^{\kappa},\delta_{i,1}^{\kappa}\}$ and $\{\beta_{i,3}^{\kappa},\delta_{i,2}^{\kappa}\}$ is greater than the sum of all other edges incident to $\beta_{i,3}^{\kappa}$ the claim follows.
\end{proof}

\begin{claim}\label{nr6} If $col(I_1(g_i^{\kappa}))=col(I_2(g_i^{\kappa}))=col(g_i^{\kappa})=0$ then $c(z_{2i}^{\kappa})=1$. 
\end{claim}
\begin{proof}
Assume $col(I_1(g_i^{\kappa}))=col(I_2(g_i^{\kappa}))=col(g_i^{\kappa})=0$ but $c(z_{2i}^{\kappa})=0$. Then, Lemma \ref{propagation} implies $c(y_{2i+1}^{\kappa})=1$ since $c(z_{2i}^{\kappa})=0$. Moreover, Claim \ref{ZNatErgoYNat} implies $c(y_{2i-1}^{\kappa})=1$. Thus, $c(y_{2i+1}^{\kappa})=c(y_{2i-1}^{\kappa})$ and therefore the nodes $u_{i,11}^{\kappa}$ and $u_{i,12}^{\kappa}$ are biased to $1$ and $0$ respectively by the component of type $12$. Then, $c(u_{i,11}^{\kappa})=1$ and therefore $c(u_{i,12}^{\kappa})=0$. But then Claim \ref{nr5} implies $c(\gamma_{i,3}^{\kappa})=1$ and therefore $c(y_{2i}^{\kappa})=0$. Then, $c(z_{2i}^{\kappa})=1$ due to Lemma \ref{propagation} which is a contradiction. Thus, the claim follows.
\end{proof}

Lemma \ref{compareuuneq} implies that either $col(g_{i}^{\kappa})=1$ or $col(g_{i}^{\kappa})=0$. If $col(I_1(g_i^{\kappa}))=1$ or $col(I_2(g_i^{\kappa}))=1$ then the claim follows from Claim \ref{nr2}. If $col(I_1(g_i^{\kappa}))=col(I_2(g_i^{\kappa}))=0$ then the claim follows from Claim \ref{nr6}.
\end{proof}

\begin{lemma}[partially similar to  Lemma 4.2 in \cite{schaeffer1}]\label{listeninglemma}
If $c(y_{2i+1}^{\kappa})=0$ then $c(\alpha_{i,1}^{\kappa})=c(\alpha_{i,2}^{\kappa})=0$ and $c(\sigma_{i,1}^{\kappa})=c(\sigma_{i,2}^{\kappa})=1$.
\end{lemma}

\begin{proof} Assume $c(y_{2i+1}^{\kappa})=0$. From Lemma \ref{propagation} we know that $c(z_{2i+1}^{\kappa})=c(z_{2i}^{\kappa})=1$ and $c(y_{2i}^{\kappa})=c(y_{2i-1}^{\kappa})=0$. We proof is done by means of the following two claims.

\begin{claim}\label{contradiction} Assume $c(y_{2i+1}^{\kappa})=0$. Then, $c(\alpha_{i,1}^{\kappa})=c(\alpha_{i,2}^{\kappa})=0,c(\beta_{i,1}^{\kappa})=c(\beta_{i,2}^{\kappa})=1,$ and $c(\gamma_{i,1}^{\kappa})=c(\gamma_{i,2}^{\kappa})=0$. 
\end{claim}
\begin{proof}
From Lemma \ref{compareuuneq} we know that either $col(g_i^{\kappa})=1$ or $col(g_i^{\kappa})=0$. From the component of type 11 node $\beta_{i,1}^{\kappa}$ is biased to $1$ and $\gamma_{i,1}^{\kappa}$ is biased to $0$. 

Assume first $c(\beta_{i,1}^{\kappa}) = c(\gamma_{i,1}^{\kappa}) = 1$. Then, node $\gamma_{i,1}^{\kappa}$ is unhappy since $c(z_{2i+1})=1$ due to Lemma \ref{propagation}. Now assume $c(\beta_{i,1}^{\kappa}) = c(\gamma_{i,1}^{\kappa}) = 0.$ Then, node $\beta_{i,1}^{\kappa}$ is unhappy. Now assume  $c(\beta_{i,1}^{\kappa})=0$ and $c(\gamma_{i,1}^{\kappa})=1$. If $col(g_i^{\kappa})=0$ then $c(u_{i,2}^{\kappa})=1$ and $c(u_{i,1}^{\kappa})=0$ due to their bias from the component of type $12$ -- recall that $c(y_{2i+1}^{\kappa})=c(y_{2i-1}^{\kappa})$ due to Lemma \ref{propagation}. But then $\beta_{i,1}^{\kappa}$ is unhappy, which is a contradiction. Now assume $col(g_i^{\kappa})=1$. Then, $c(u_{i,3}^{\kappa})=0$ and $c(u_{i,4}^{\kappa})=1$ due to the bias of type $12$. But then $\gamma_{i,1}^{\kappa}$ is unhappy since $c(z_{2i+1}^{\kappa})=1$ due to Lemma \ref{propagation} which is also a contradiction. Thus, $c(\beta_{i,1}^{\kappa})=1$ and $c(\gamma_{i,1}^{\kappa})=0$. 

Since $c(\beta_{i,1}^{\kappa})=1$ node $\alpha_{i,1}^{\kappa}$ must be white since it is biased to white by type $11$. The proof for $\alpha_{i,1}^{\kappa},\beta_{i,2}^{\kappa}$, and $\gamma_{i,2}^{\kappa}$ is analogous.
\end{proof}

\begin{claim} Assume $c(y_{2i+1}^{\kappa})=0$. Then, $c(\sigma_{i,1}^{\kappa})=c(\sigma_{i,2}^{\kappa})=1,c(\beta_{i,3}^{\kappa})=0,$ and $c(\gamma_{i,3}^{\kappa})=1$. 
\end{claim}
\begin{proof}
Assume first that $c(\delta_{i,1}^{\kappa})=c(\delta_{i,2}^{\kappa})=0$. Type $11$ biases $\delta_{i,1}^{\kappa}$ and $\delta_{i,2}^{\kappa}$ to black. Therefore, both nodes $\delta_{i,1}^{\kappa}$ and $\delta_{i,2}^{\kappa}$ are unhappy. Therefore, we may assume that at least one of them is black. 

If $c(\beta_{i,3}^{\kappa}) = c(\gamma_{i,3}^{\kappa}) = 1$ then $\beta_{i,3}^{\kappa}$ is unhappy because $\beta_{i,3}^{\kappa}$ is biased to $0$ by type $11$. Now assume $c(\beta_{i,3}^{\kappa}) = c(\gamma_{i,3}^{\kappa}) = 0.$ Then, node $\gamma_{i,3}^{\kappa}$ is unhappy since $c(y_{2i})=0$ has its unnatural value due to Lemma \ref{propagation} and since $\gamma_{i,3}^{\kappa}$ is biased to $1$ by type $11$. Now assume $c(\beta_{i,3}^{\kappa})=1$ and $c(\gamma_{i,3}^{\kappa})=0$. If $col(g_i^{\kappa})=0$ then the bias of type $12$ implies $c(u_{i,11}^{\kappa})=1$ and $c(u_{i,12}^{\kappa})=0$ which is a contradiction since $\gamma_{i,3}^{\kappa}$ is unhappy then due to the bias of type $11$. But if $col(g_i^{\kappa})=1$ then the bias of type $12$ implies $c(u_{i,10}^{\kappa})=0$ and $c(u_{i,9}^{\kappa})=1$ which is also a contradiction since $\beta_{i,3}^{\kappa}$ is unhappy then due to the bias of type $11$. Thus, $c(\beta_{i,3}^{\kappa})=0$ and $c(\gamma_{i,3}^{\kappa})=1$. 

Since $c(\beta_{i,3}^{\kappa})=0$ we get $c(\delta_{i,1}^{\kappa})=c(\delta_{i,2}^{\kappa})=1$ due to the biases of type $11$. Then, $c(\tau_{i,1}^{\kappa})=c(\tau_{i,2}^{\kappa})=0$ and therefore $c(\sigma_{i,1}^{\kappa})=c(\sigma_{i,2}^{\kappa})=1$ also due to the biases of type $11$.
\end{proof}
\end{proof}

\begin{lemma}[partially similar to Lemma 4.3 in \cite{schaeffer1}]\label{naturallemma}
Assume $c(y_{2i+1}^{\kappa})=1$ and $c(y_{2i-1}^{\kappa})=0$. If $g_{i}^{\kappa}$ is correct then $z_{2i}^{\kappa},$ $z_{2i+1}^{\kappa}$, and $y_{2i}^{\kappa}$ have the colors to which they are biased by type $8$. If $g_{i}^{\kappa}$ is not correct then flipping $g_{i}^{\kappa}$ does not decrease the cut by a weight of an edge type $3$ corresponding to $g_{i}^{\kappa}$ and increases it by a weight of type $14$ if $g_{i}$ is indifferent with respect to edges of type $5$ and $7$.
\end{lemma}

\begin{proof} 
The proof uses the following three claims.
\begin{claim}\label{beta12correct}
Assume $c(y_{2i+1}^{\kappa})=1$. Then, $c(\alpha_{i,1}^{\kappa})= \neg col(I_1(g_i^{\kappa}))$ and $c(\alpha_{i,2}^{\kappa})= \neg col(I_2(g_i^{\kappa}))$. If, in addition, $c(y_{2i-1}^{\kappa})=0$ then $c(\beta_{i,1}^{\kappa})= col(I_1(g_i^{\kappa}))$ and $c(\beta_{i,2}^{\kappa})= col(I_2(g_i^{\kappa}))$.
\end{claim}
\begin{proof}
If $col(I_1(g_i^{\kappa}))=1$ then $c(\alpha_{i,1}^{\kappa})=0$. If, on the other hand, $col(I_1(g_i^{\kappa}))=0$ then $c(\alpha_{i,1}^{\kappa})=1$ since $\alpha_{i,1}^{\kappa}$ is biased to $1$ by type $11$. Now assume $c(y_{2i-1}^{\kappa})=0$. If $col(I_1(g_i^{\kappa}))=1$ then $c(\beta_{i,1}^{\kappa})=1$ since $c(\alpha_{i,1}^{\kappa})=0$. Now assume $col(I_1(g_i^{\kappa}))=0$. Due to $c(\alpha_{i,1}^{\kappa})=1$ and since $\beta_{i,1}^{\kappa}$ is biased to $0$ by type $11$, it can only be black if $\gamma_{i,1}^{\kappa}$ and $u_{i,1}^{\kappa}$ are both white. But if $\gamma_{i,1}^{\kappa}$ is white then $u_{i,4}^{\kappa}$ must be black since $\gamma_{i,1}^{\kappa}$ is biased to black by type $11$. If $col(g_i^{\kappa})=1$ then $c(u_{i,2}^{\kappa})=0$ and $c(u_{i,1}^{\kappa})=1$ due to the bias of type $12$ which is a contradiction. On the other hand, if $col(g_i^{\kappa})=0$ then $c(u_{i,3}^{\kappa})=1$ and $c(u_{i,4}^{\kappa})=0$ due to the bias of type $12$ which is also a contradiction. Thus, $c(\beta_{i,1}^{\kappa})=0$.

The argumentation for $\alpha_{i,2}^{\kappa}$ and $\beta_{i,2}^{\kappa}$ is analogous.
\end{proof}

\begin{claim}\label{beta3correct}
Assume $c(y_{2i+1}^{\kappa})=1$ and $c(y_{2i-1}^{\kappa})=0$. Then, $c(\beta_{i,3}^{\kappa})= col(I_1(g_i^{\kappa})) \vee col(I_2(g_i^{\kappa}))$.
\end{claim}
\begin{proof}
If an input is white then the corresponding $\delta_{i,j}^{\kappa}$ is black due to Claim \ref{nr4}. Thus, if both inputs are white then $\beta_{i,3}^{\kappa}$ is white. 

Now assume that at least one input is black. Let $I_{1}(g_{i}^{\kappa})=1$. Since $\sigma_{i,1}^{\kappa}$ is biased to white, we have $c(\sigma_{i,1}^{\kappa}) = 0.$ Analogously, we get $c(\tau_{i,1}^{\kappa}) = 1$. Node $\delta_{i,1}^{\kappa}$ is biased to white by type $11$. If both nodes $\delta_{i,1}^{\kappa}$ and $\delta_{i,2}^{\kappa}$ are black then $\delta_{i,1}^{\kappa}$ is unhappy. Thus, we may assume that at least one of them is white. Since $\beta_{i,3}^{\kappa}$ is biased to 1 by type $11$, it can only be white if $\gamma_{i,3}^{\kappa}$ and $u_{i,9}^{\kappa}$ are both black. But if $\gamma_{i,3}^{\kappa}$ is black then $u_{i,12}^{\kappa}$ must be white since $\gamma_{i,3}^{\kappa}$ is biased to white by type $11$. Then, the bias of type $12$ implies that if $g_i^{\kappa}$ is white then $u_{i,10}^{\kappa}$ is black and $u_{i,9}^{\kappa}$ is white and if $g_i^{\kappa}$ is black then $u_{i,11}^{\kappa}$ is white and $u_{i,12}^{\kappa}$ is black, each resulting in a contradiction. Thus, $c(\beta_{i,3}^{\kappa})=1$.
\end{proof}

\begin{claim}\label{rest}
Assume $c(y_{2i+1}^{\kappa})=1$ and $c(y_{2i-1}^{\kappa})=0$. If $g_i^{\kappa}$ is correct then $c(\gamma_{i,1}^{\kappa})=c(\gamma_{i,2}^{\kappa})=1$ and $c(\gamma_{i,3}^{\kappa})=0$. If $g_i^{\kappa}$ is not correct then at least one of the nodes $c(u_{i,2}^{\kappa}),$ $c(u_{i,3}^{\kappa})$ has the same color as $g_i^{\kappa}$, at least one of the nodes $c(u_{i,6}^{\kappa}),$ $c(u_{i,7}^{\kappa})$ has the same color as $g_i^{\kappa}$, and at least one of the nodes $c(u_{i,10}^{\kappa}),$ $c(u_{i,11}^{\kappa})$ has the same color as $g_i^{\kappa}$.
\end{claim}
\begin{proof}
Assume first that $g_i^{\kappa}$ is correct. From Claim \ref{beta12correct} we know that $c(\beta_{i,1}^{\kappa})=col(I_1(g_i^{\kappa}))$. Since $g_i^{\kappa}$ is correct, at least one of the two nodes $\beta_{i,1}^{\kappa}$ and $g_i^{\kappa}$ is white. Assume first that $c(\beta_{i,1}^{\kappa})=1$. Then, due to Claim \ref{beta12correct}, we have $c(\alpha_{i,1}^{\kappa})=0$. If $g_i^{\kappa}$ is white then $c(u_{i,3}^{\kappa})=1$ and $c(u_{i,4}^{\kappa})=0$ since they are biased to $1$ and $0$ respectively by type $12$. Since at least one of the nodes $u_{i,4}^{\kappa}$ and $\beta_{i,1}^{\kappa}$ is white and $\gamma_{i,1}^{\kappa}$ is biased to black by type $11$ it is actually black. Analogously, we can argue that $\gamma_{i,2}^{\kappa}$ is also black. Moreover, by Claim \ref{beta3correct} we know that $c(\beta_{i,3}^{\kappa})= col(I_1(g_i^{\kappa})) \vee col(I_2(g_i^{\kappa}))$. Since $g_i^{\kappa}$ is correct, it has the opposite color as $\beta_{i,3}^{\kappa}$. If $col(g_i^{\kappa})=1$ then $c(\alpha_{i,1}^{\kappa})=c(\alpha_{i,2}^{\kappa})=1$ and therefore $c(u_{i,11}^{\kappa})=0$ and $c(u_{i,12}^{\kappa})=1$ since they are biased to $0$ and $1$ respectively by type $12$. Therefore, at least one of the nodes $u_{i,12}^{\kappa}$ and $\beta_{i,3}^{\kappa}$ is black. Thus, $\gamma_{i,3}^{\kappa}$ has the color to which it is biased by type $11$, i.\ e. $0$.

Now assume that $g_i^{\kappa}$ is not correct. If $col(I_1(g_i^{\kappa}))=1$ then $c(\alpha_{i,1}^{\kappa})=0$ and $c(\beta_{i,1}^{\kappa})=1$ due to Claim \ref{beta12correct}. Moreover, since $g_i^{\kappa}$ is not correct, we have $col(g_i^{\kappa})=1$. Then $c(\alpha_{i,1}^{\kappa})=0$ and the biases of type 12 imply $c(u_{i,1}^{\kappa})=0$ and $c(u_{i,2}^{\kappa})=1$. If $col(I_1(g_i^{\kappa}))=0$ then $c(\alpha_{i,1}^{\kappa})=1$ and $c(\beta_{i,1}^{\kappa})=0$ due to Claim \ref{beta12correct}. Since $\gamma_{i,1}^{\kappa}$ is biased to $1$ by type $11$ we get $c(\gamma_{i,1}^{\kappa})=1$. Moreover, since $c(\alpha_{i,1}^{\kappa})=1$ the biases of type $12$ imply $c(u_{i,1}^{\kappa})=1,$ $c(u_{i,2}^{\kappa})=0,$ $c(u_{i,4}^{\kappa})=0$ and $c(u_{i,3}^{\kappa})=1$. The proof for $c(u_{i,6}^{\kappa})$ and $c(u_{i,7}^{\kappa})$ is analogous. By Claim \ref{beta3correct} we know that $c(\beta_{i,3}^{\kappa})= col(I_1(g_i^{\kappa})) \vee col(I_2(g_i^{\kappa}))$. Since $g_i^{\kappa}$ is not correct, we have $col(g_i^{\kappa})=c(\beta_{i,3}^{\kappa})$. If $c(\beta_{i,3}^{\kappa})=0$ then, due to Claim \ref{beta12correct} we have $c(\alpha_{i,1}^{\kappa})=c(\alpha_{i,2}^{\kappa})=1$. Then, the biases of the component of type $12$ imply $c(u_{i,9}^{\kappa})=1$ and $c(u_{i,10}^{\kappa})=0$. Thus, $u_{i,10}^{\kappa}$ has the same color as $g_i^{\kappa}$. If $c(\beta_{i,3}^{\kappa})=1$ then $c(\gamma_{i,3}^{\kappa})=0$ since it is biased to white by type $11$. Moreover, $c(\alpha_{i,1}^{\kappa})=0$ or $c(\alpha_{i,2}^{\kappa})=0$ due to Claim \ref{beta12correct}. Then, the biases of the component of type $12$ imply $c(u_{i,9}^{\kappa})=0$ and $c(u_{i,10}^{\kappa})=1$ as well as $c(u_{i,12}^{\kappa})=1$ and $c(u_{i,11}^{\kappa})=0$. Then, we have $c(u_{i,10}^{\kappa}) \neq c(u_{i,11}^{\kappa})$ which proves the claim.
\end{proof}

Assume $c(y_{2i+1}^{\kappa})=1$ and $c(y_{2i-1}^{\kappa})=0$. Assume furthermore that $g_i^{\kappa}$ is correct. Then, due to Claim \ref{rest} we have $c(\gamma_{i,1}^{\kappa})=c(\gamma_{i,2}^{\kappa})=1$ and $c(\gamma_{i,3}^{\kappa})=0$. Then, if the nodes $y_{j}^{\kappa},z_{j}^{\kappa}$ for all $j$ are biased to their natural values then due to $c(y_{2i+1}^{\kappa})=1$ we get $c(z_{2i+1}^{\kappa})=0$, $c(y_{2i}^{\kappa})=1$, and $c(z_{2i}^{\kappa})=0$. If, on the other hand, the nodes $y_{j}^{\kappa},z_{j}^{\kappa}$ for all $j$ are biased to their unnatural values then due to $c(y_{2i-1}^{\kappa})=0$ we get $c(z_{2i}^{\kappa})=1$, $c(y_{2i}^{\kappa})=0$, and $c(z_{2i+1}^{\kappa})=1$. 

Now assume that $g_i^{\kappa}$ is not correct. Due to $c(y_{2i-1}^{\kappa})=0$ Lemma \ref{propagation} implies $c(y_{2j+1}^{\kappa})=0$ for all $j < i$. Then, Lemma \ref{listeninglemma} implies $c(\alpha_{j,1}^{\kappa})=c(\alpha_{j,2}^{\kappa})=0$ and $c(\sigma_{j,1}^{\kappa})=c(\sigma_{j,2}^{\kappa})=1$ for all $j < i.$ Then, Claim \ref{rest} implies that flipping $g_{i}^{\kappa}$ does not decrease the cut by a weight of type $3$. Finally, Claim \ref{beta12correct} implies $c(\alpha_{i,j}^{\kappa})= \neg col(I_j(g_i^{\kappa}))$ for $1 \le j \le 2$. Thus, flipping $g_{i}^{\kappa}$ to its correct color gains a weight of type $14$ if $g_{i}^{\kappa}$ is indifferent with respect to edges of type $5$ and $7$.
\end{proof}

\begin{lemma}\label{onezero}
If $col(d^{\kappa}) = 1$, $col(d^{\overline{\kappa}}) = 0$, and all nodes $y_i^{\kappa},z_i^{\kappa}$ for $0 \le i \le 2N+1$ are biased to their natural values then $c(y_1^{\kappa})=1$.
\end{lemma}

\begin{proof}
Assume $col(d^{\kappa}) = 1$, $col(d^{\overline{\kappa}}) = 0$, and that all nodes $y_i^{\kappa},z_i^{\kappa}$ for $0 \le i \le 2N+1$ are biased to their natural values. We show that all gates of $G_C^{\kappa}$  are correct. For the sake of contradiction we assume that $G_C^{\kappa}$ contains an incorrect gate and let $g_i^{\kappa}$ be the incorrect gate with the highest index. 

We first show by induction that the nodes $y_j^{\kappa},z_j^{\kappa}$ for $j > 2i+1$ and $y_{2i+1}^{\kappa}$ have their natural values. Since $y_{2N+1}^{\kappa}$ is biased to its natural value, we have $c(y_{2N+1}^{\kappa})=1$. Assume $c(y_{2j+1}^{\kappa})=1$ for any $j > i$. If any one of the nodes $z_{2j+1}^{\kappa},y_{2j}^{\kappa},z_{2j}^{\kappa}$ has its unnatural value then Lemma \ref{propagation} implies $c(y_{2j-1}^{\kappa}) = 0$. Then, Lemma \ref{naturallemma} implies that all nodes $z_{2j+1}^{\kappa},y_{2j}^{\kappa},z_{2j}^{\kappa}$ have their natural values whereafter Claim \ref{ZNatErgoYNat} implies $c(y_{2i-1}^{\kappa})=1$ which is a contradiction. Thus, $c(y_{2j+1}^{\kappa})=1$ implies $c(y_{2j-1}^{\kappa})=1$ for any $j > i$ and therefore it follows by induction that  all nodes $y_j^{\kappa},z_j^{\kappa}$ for $j > 2i+1$ and $y_{2i+1}^{\kappa}$ have their natural values. 

Since $g_i^{\kappa}$ is incorrect, all nodes $y_j^{\kappa},z_j^{\kappa}$ for $j \le 2i-1$ have their unnatural values due to Lemma \ref{incorrectnesslemma} and \ref{propagation}. According to Lemma \ref{listeninglemma} and \ref{naturallemma} correcting $g_i^{\kappa}$ does not decrease the cut by a weight of type 3 and gains a weight of type 14. In the following, we distinguish between three cases for the index $i$ and show that $g_{i}^{\kappa}$ is unhappy in each of the cases. First, if $i > 2n+2m$ then there are no node edges of type $5$ or $7$ incident to $g_{i}^{\kappa}$. Thus, $g_{i}^{\kappa}$ is unhappy then. Second, if $2m+1 \le i \leq 2n+2m$ then there are no edges of type $5$ incident to $g_{i}^{\kappa}$. Due to Lemma \ref{resetterlemma} correcting $g_i^{\kappa}$ does not decrease the cut by a weight of type 7. Third, if $i \leq 2m$ then there are no edges of type $7$ incident to $g_{i}^{\kappa}$. Correcting $g_i^{\kappa}$ does not decrease the cut by a weight of type 5 since due to the biases of type $13$ we have $c(u_{i,14})=0$, $c(u_{i,13})=1$ for $i \leq m$ and $c(u_{i,16})=1$, $c(u_{i,15})=0$ for $m < i \leq 2m$. Altogether, $g_{i}^{\kappa}$ is unhappy in each of the three cases which is a contradiction. Thus, $g_{i}^{\kappa}$ is correct for all $i$. Thus, all nodes $y^{\kappa}_i,z^{\kappa}_i$ for $1 \leq i \leq 2N+1$ have their natural values.
\end{proof}

\begin{lemma}\label{dis}
If $c(y_{1}^{\kappa})=c(u^{\kappa})=0$ and $c(u^{\overline{\kappa}})=1$ then $col(d^{\kappa})=1$ and $col(d^{\overline{\kappa}})=0$.
\end{lemma}

\begin{proof}
Assume $c(y_{1}^{\kappa})=c(u^{\kappa})=0$ and $c(u^{\overline{\kappa}})=1$. Then, independently of the color of $y_{1}^{\kappa}$, node $d^{\kappa}$ is biased to $1$ and $d^{\overline{\kappa}}$ to $0$ by type $10$ -- recall that Theorem \ref{substitute} only applies to local optima in which the comparing node is biased to the color that it had if it was a single node. Lemma \ref{propagation} implies $c(y_{0}^{\kappa})=0$. Since $c(u^{\kappa})=0$ and $c(y_{0}^{\kappa})=0$ node $y_{0}^{\kappa}$ and its counterpart, namely the constant $0$, are decisive for $d^{\kappa}$. Thus, Theorem \ref{substitute} implies $col(d^{\kappa})=1$. 

Since $c(u^{\overline{\kappa}})=1$, node $u^{\overline{\kappa}}$ and its counterpart, namely the constant $1$, are decisive for $d^{\overline{\kappa}}$. Thus, Theorem \ref{substitute} implies $col(d^{\overline{\kappa}})=0$.
\end{proof}

\begin{lemma}\label{disumgekehrt}
If $c(y_{1}^{\kappa})=c(u^{\overline{\kappa}})=0$ and $c(y_{1}^{\overline{\kappa}})=c(u^{\kappa})=c(y_{0}^{\overline{\kappa}})=1$ then $col(d^{\overline{\kappa}})=0.$
\end{lemma}

\begin{proof}
Assume $c(y_{1}^{\kappa})=c(u^{\overline{\kappa}})=0$ and $c(y_{1}^{\overline{\kappa}})=c(u^{\kappa})=c(y_{0}^{\overline{\kappa}})=1$. Then, Lemma \ref{propagation} implies $c(z_{0}^{\kappa})=1$ since $c(y_{1}^{\kappa})=0$. Node $d^{\overline{\kappa}}$ is biased to $0$ by type $10$. Since $c(u^{\overline{\kappa}})=0$, $c(y_{0}^{\overline{\kappa}})=1$, and $c(z_{0}^{\kappa})=1$ node $z_{0}^{\kappa}$ and its counterpart, i.\ e. the constant $1$, are decisive for $d^{\overline{\kappa}}$. Thus, Theorem \ref{substitute} implies $c(d^{\overline{\kappa}})=0.$
\end{proof}

\begin{lemma}\label{yzunnat}
If $col(d^{\kappa})=1$, $col(d^{\overline{\kappa}})=0$, and all $y_{i}^{\kappa},z_{i}^{\kappa}$ are biased to their unnatural values by type $8$ then they have their unnatural values.
\end{lemma}

\begin{proof}
Assume that $col(d^{\kappa})=1$, $col(d^{\overline{\kappa}})=0$, and all $y_{i}^{\kappa},z_{i}^{\kappa}$ are biased to their unnatural values by type $8$. Then, $col(d^{\overline{\kappa}})=0$ together with the bias to the unnatural value imply $c(z_{0}^{\kappa})=1$. Then, $col(d^{\overline{\kappa}})=1$ together with the bias to the unnatural value imply $c(y_{0}^{\kappa})=0$. Then, $c(z_{1}^{\kappa})=1$ and therefore $c(y_{1}^{\kappa})=0$. If $c(y_{j-1}^{\kappa})=0$ for any $2 \le j \le 2N+1$ then the bias to the unnatural value implies $c(z_{j}^{\kappa})=1$. Analogously, if $c(z_{j}^{\kappa})=1$ for any $2 \le j \le 2N+1$ then $c(y_{j}^{\kappa})=0$. Thus, the claim follows by induction.
\end{proof}

\begin{observation}\label{yzendearg}
If $c(y_1^{\kappa})=1$ and all $y_{i}^{\kappa},z_{i}^{\kappa}$ are biased to their natural values by type $8$ then $c(z_1^{\kappa})=c(z_0^{\kappa})=0$ and $c(y_0^{\kappa})=1$.
\end{observation}

\begin{lemma}\label{lambdas}
If $col(d^{\overline{\kappa}})=1$, $col(d^{\kappa})=0$ and all nodes $y_{i}^{\overline{\kappa}},z_{i}^{\overline{\kappa}}$ are biased to their unnatural values by type $8$ then $c(x^{\overline{\kappa}})=c(\lambda^{\overline{\kappa}}).$
\end{lemma}

\begin{proof}
Assume $col(d^{\overline{\kappa}})=1$, $col(d^{\kappa})=0$ and that all nodes $y_{i}^{\overline{\kappa}},z_{i}^{\overline{\kappa}}$ are biased to their unnatural values by type $8$. Then, Lemma \ref{yzunnat} implies that $c(y_{2i+1}^{\overline{\kappa}})=0$ for each $0 \le i \le N$. Then, Lemma \ref{resetterlemma} implies $c(\lambda_{i}^{\overline{\kappa}}) \ne c(\mu_{i}^{\overline{\kappa}})$ for all $1 \le i \le n$. Thus, Lemma \ref{listeninglemma} implies that $x^{\overline{\kappa}}_{i}$ takes the colors of $\lambda^{\overline{\kappa}}_{i}$ for any $1 \le i \le n$.
\end{proof}

\begin{lemma}\label{wx0x1}
Assume $c(y_{1}^{\kappa})=col(d^{\overline{\kappa}})=1$ and $col(d^{\kappa})=0$. If $\kappa = 0$ and $C(x^{0}) \ge C(x^{1})$ or $\kappa = 1$ and $C(x^{1}) > C(x^{0})$ then $x^{\overline{\kappa}}=w(x^{\kappa}).$
\end{lemma}

\begin{proof}
Assume $c(y_{1}^{\kappa})=col(d^{\overline{\kappa}})=1$ and $col(d^{\kappa})=0$. We first consider the case that $\kappa = 0$ and $C(x^{0}) \ge C(x^{1})$. Due to $C(x^{0}) \ge C(x^{1})$ all $y_{i}^{0},z_{i}^{0}$ are biased to their natural values by type $8$. Since $c(y_1^{0})=1$ Lemma \ref{incorrectnesslemma} and \ref{propagation} together imply that all gates in $G_{C}^{0}$ compute correctly. Since all gates compute correctly we have $col(w_{i,1}^{0})=col(w_{i,2}^{0})$ for all $1 \le i \le n$. Then, Lemma \ref{resetterlemma} implies $col(w_{i,1}^{0})=c(\lambda_{i}^{1})$ for all $i$ and therefore $w(x^{0})=c(\lambda^{1})$. Moreover, due to the same Lemma we also have $col(w_{i,1}^{0}) \ne c(\mu_{i}^{1})$ for all $i$. Assume for the sake of contradiction, $c(x^{1}) \neq w(x^{0})$. Then all nodes $y_i^{1},z_i^{1}$ are biased to their unnatural values by type $8$. Then, Lemma \ref{yzunnat} implies that they have their unnatural values. Then, Lemma \ref{listeninglemma} implies that a flip of a node $x^{1}_{i}$ for any $i$ does not decrease the cut by a weight of type $3$. Thus the nodes $x^{1}$ assume the colors such that $x^{1}=w(x^{0})$ which is a contradiction. Thus, $x^{1}=w(x^{0}).$ The case $\kappa = 1$ and $C(x^{1}) > C(x^{0})$ is symmetric with the single difference that the case for the equality of $C(x^{0})$ and $C(x^{1})$ is obsolete.
\end{proof}

\begin{lemma}\label{transportToRight}
Assume $c(y_{2i-1}^{\kappa})=1$, and $c(\alpha_{i,j}^{\kappa}) \ne col(I_{j}(g_{i}^{\kappa}))$ for $1 \le j \le 2$. Then, $c(u_{i,13}^{\kappa})=c(u_{i,15}^{\kappa}) \ne col(g_{i}^{\kappa})$ and $c(u_{i,14}^{\kappa})=c(u_{i,16}^{\kappa}) = col(g_{i}^{\kappa})$.
\end{lemma}

\begin{proof}
Assume $c(y_{2i-1}^{\kappa})=1$, and $c(\alpha_{i,j}^{\kappa}) \ne col(I_{j}(g_{i}^{\kappa}))$ for $1 \le j \le 2$. Then, $g_{i}^{\kappa}$ is correct due to Lemma \ref{incorrectnesslemma} and \ref{propagation}. Since $c(y_{2i-1}^{\kappa})=1$, node $u_{i}^{14}$ is biased to $c(\alpha_{i,1}^{\kappa}) \wedge c(\alpha_{i,2}^{\kappa})$ by type $13$. Thus, it is biased to black if and only if $\alpha_{i,1}^{\kappa}$ and $\alpha_{i,2}^{\kappa}$ are both black. But since $c(\alpha_{i,j}^{\kappa}) \ne col(I_{j}(g_{i}^{\kappa}))$ and $col(g_{i}^{\kappa})=\neg(col(I_{1}(g_{i}^{\kappa})) \vee col(I_{2}(g_{i}^{\kappa})))$ node $g_{i}^{\kappa}$ is also black if and only if $\alpha_{i,1}^{\kappa}$ and $\alpha_{i,2}^{\kappa}$ are both black. Thus, $u_{i}^{14}$ is biased to the color that $g_{i}^{\kappa}$ has and $u_{i}^{13}$ to the opposite. Therefore, $c(u_{i,13}^{\kappa}) \ne c(u_{i,14}^{\kappa}) = col(g_{i}^{\kappa})$. 

Since $c(y_{2i-1}^{\kappa})=1$, node $u_{i}^{15}$ is biased to $\neg c(\alpha_{i,1}^{\kappa}) \vee \neg c(\alpha_{i,2}^{\kappa})$ by type $13$. Therefore, it is biased to white if and only if $\alpha_{i,1}^{\kappa}$ and $\alpha_{i,2}^{\kappa}$ are both black. As in the previous case, it follows that $u_{i}^{15}$ is biased to the opposite color that $\hat{g}_{i}^{\kappa}$ has. Since $u_{i}^{16}$ is biased to the opposite color, we have $c(u_{i,15}^{\kappa}) \ne c(u_{i,16}^{\kappa}) = col(\hat{g}_{i}^{\kappa})$ which concludes the proof.
\end{proof}

Now we continue to prove Theorem \ref{plsred}. Let $P$ be a local optimum in $G_C$ and assume w.\ l.\ o.\ g. that $C(x^0) \geq C(x^1)$. Then, all nodes $y_i^{0},z_i^{0}$ are biased to their natural values by type $8$. From Lemma \ref{compareuuneq} we know that $c(u^0) \neq c(u^1)$. In the following, we consider the four possible cases for the vector $c(y_1^0, y_1^1)$ and distinguish within each of these cases between the two cases for $c(u^0,u^{1})$, if necessary. For the three cases for the colors of $c(y_1^0, y_1^1)$ in which at least one node is white we show that they cannot occur in local optima and for the case that both nodes are black we show that the colors of the input nodes $x^{0}$ induce a local optimum of $C$. 

$\underline{(0,0)}$: Due to Lemma \ref{propagation} we have $c(y_0^0)=c(y_0^1)=0$ and $c(z_0^0)=c(z_0^{1})=1$. If $c(u^0,u^1)=(0,1)$ then Lemma \ref{dis} implies $col(d^0)=1$ and $col(d^1)=0$ whereafter Lemma \ref{onezero} implies $c(y_1^0)=1$ which is a contradiction. Now assume $c(u^0,u^1)=(1,0)$. Then, Lemma \ref{dis} implies $col(d^0)=0$ and $col(d^1)=1$. If the nodes $y^{1}_i,z^{1}_i$ are biased to their unnatural values by type $8$ then $c(x^{1})=c(\lambda^{1})$ due to Lemma \ref{lambdas}. Then, at least one of the conditions $w(x^{0}) \ne c(x^{1})$, and $w(x^{0}) = c(\lambda^{1})$ is violated. But this implies that type $8$ biases the nodes $y^{1}_i,z^{1}_i$ to their natural value which is a contradiction. Therefore, all nodes $y_i^{1},z_i^{1}$ are biased to their natural values by type $8$. But then Lemma \ref{onezero} implies that $c(y_1^1)=1$ which is also a contradiction.

$\underline{(0,1)}$: According to Lemma \ref{propagation} we have $c(y_0^0)=0$ and $c(z_0^{0})=1$. If $c(u^0,u^1)=(0,1)$ then Lemma \ref{dis} implies $col(d^0)=1$ and $col(d^1)=0$ whereafter Lemma \ref{onezero} implies $c(y_1^0)=1$ which is a contradiction. Now assume $c(u^0,u^1)=(1,0)$. Assume for the sake of contradiction that $y_0^1$ is white. If the nodes $y_{i}^{1},z_{i}^{1}$ are biased to their natural values by type $8$ then Observation \ref{yzendearg} implies $c(y_0^1)=1$ which is a contradiction. If they are biased to their unnatural values then $c(z_{1}^{1})=1$ whereafter $c(y_{1}^{1})=0$ which is also a contradiction. Thus, $y_0^1$ is black. Then, Lemma \ref{disumgekehrt} implies $col(d^1)=0$, but then $c(u^1)=1$ due to the bias of type $9$ which is again a contradiction.

$\underline{(1,0)}$: Lemma \ref{propagation} and Observation \ref{yzendearg} together imply $c(y_0^0)=c(z_0^1)=1$ and $c(z_0^0)=c(y_0^{1})=0$. If $c(u^0,u^1)=(0,1)$ then Lemma \ref{disumgekehrt} implies $col(d^0)=0$. But then $c(u^0)=1$, since $u^0$ is biased by type $9$ to the color of $y_1^{0}$, i.\ e. to 1, which is a contradiction. Now assume $c(u^0,u^1)=(1,0)$. Then, $col(d^0)=0$ and $col(d^1)=1$ due to Lemma \ref{dis}. Due to Lemma \ref{wx0x1} we have $x^1=w(x^0)$ in which case all nodes $y_i^1,z_i^1$ are biased to their natural values by type $8$. Then, Lemma \ref{onezero} implies $c(y^1_{1})=1$ which is a contradiction.

$\underline{(1,1)}$: Due to Observation \ref{yzendearg} we have $c(y_0^0)=1$ and $c(z_0^{0})=0$. Since $c(y_1^0)=c(y_1^1)=1$, Lemma \ref{incorrectnesslemma} and Lemma \ref{propagation} together imply that all gates in $G_C^0$ and $G_C^1$ are correct and $y^{\kappa}_{i},z^{\kappa}_{i}$ have their natural values for $\kappa \in \{0,1\}, 2 \le i \le 2N+1$. Claim \ref{beta12correct} implies $c(\alpha_{j,k}^{0}) \ne col(I_{k}(g_{j}^{0}))$ and $c(\alpha_{i,j}^{1}) \ne col(I_{j}(g_{i}^{1}))$ for all $i$ and $1 \le j \le 2$. Moreover, we also have $col(g_{i}^{\kappa}) \ne col(\hat{g}_{i}^{\kappa})$ for all $i,\kappa$. According to type 10, node $d^{0}$ is biased to $0$ and node $d^{1}$ to $1$. 

In the following, we first show $col(d^{0})=0$ by naming the decisive neighbors of $d^{0}$ and showing that their color is black. For this, we distinguish four cases. First, if $c(u^{0})=1$ then $u^{0}$ and its counterpart, i.\ e. the constant $1$, are decisive for $d^{0}$. Second, if $c(u^{0})=0$ and $c(z_{0}^{1})=1$ then $z_{0}^{1}$ and its counterpart, also the constant $1$, are decisive since $y_0^0$ is black and its counterpart is a white constant. Third, if $c(u^{0})=0$, $c(z_{0}^{1})=0$, and $C(x^0) > C(x^1)$ then the pair of nodes $g_i^0,\hat{g}_i^1$ with highest index $i$ for $i \leq m$ such that $g_i^0$ and $\hat{g}_i^1$ have the same color are both black. Then, Lemma \ref{transportToRight} implies $c(u_{i,13}^{0})=c(u_{2m+i,15}^{1})=0$ and $c(u_{i,14}^{0})=c(u_{2m+i,16}^{1})=1$ -- recall that $c(y_{2i-1}^{\kappa})=1$ for $\kappa \in \{0,1\}$. Moreover, since $g_{j}^{0} \ne g_{2m+j}^{1}$ for all $j>i$ the same Lemma implies $c(u_{j,13}^{0}) \ne c(u_{j,14}^{0}) = g_{j}^{0} \ne g_{2m+j}^{1} = c(u_{2m+j,16}^{1}) \ne c(u_{2m+j,15}^{1})$ for all $j>i$. Then, the nodes $u_{i,14}^{0}$ and $u_{2m+i,16}^{0}$ are decisive for $d^{0}$. Fourth, if $c(u^{0})=0$, $c(z_{0}^{1})=0$, and $C(x^0) = C(x^1)$ then the neighbors of type $5$ representing the constant $1$ adjacent to $d^{0}$ via edges of relative weight $1$ are decisive for $d^{0}$. In any of the above cases the decisive neighbors of $d^{0}$ are black. Since node $d^0$ is biased to $0$ by type $10$ Theorem \ref{substitute} implies $col(d^0)=0$. Then, the bias of type $9$ implies $c(u^0,u^1)=(1,0)$. 

Now we show that $col(d^{1})=1$. For this, we distinguish three cases. First, if $c(y_{0}^{1})=0$ then $y_{0}^{1}$ and its counterpart, the constant $0$, are decisive since $c(u^{1})=0$. Second, if $c(y_{0}^{1})=1$ and $C(x^0) > C(x^1)$ then the pair of nodes $g_i^1,\hat{g}_i^0$ with highest index $i$ for $i \leq m$ such that $g_i^1$ and $\hat{g}_i^0$ have the same color are both white. Then, Lemma \ref{transportToRight} implies  $c(u_{i,13}^{1})=c(u_{2m+i,15}^{0})=1$, $c(u_{i,14}^{1})=c(u_{2m+i,16}^{0})=0$. Moreover, the same Lemma implies $c(u_{j,13}^{1}) \ne c(u_{i,14}^{1}) = g_{j}^{0} \ne g_{2m+j}^{1} = c(u_{2m+j,16}^{0}) \ne c(u_{2m+j,15}^{0})$ for all $j>i$. Then, the nodes $u_{i,14}^{1}$ and $u_{2m+i,16}^{1}$ are decisive for $d^{1}$. Third, $c(y_{0}^{1})=1$, and $C(x^0) = C(x^1)$ then the neighbors of type $5$ representing the constant $0$ adjacent to $d^{1}$ via edges of relative weight $1$ are decisive for $d^{1}$. In any of the above cases the decisive neighbors of $d^{1}$ are white. Since node $d^1$ is biased to $0$ by type $10$ Theorem \ref{substitute} implies $col(d^1)=1$.

According to Lemma \ref{wx0x1} we obtain $x^1=w(x^0)$. Due to our assumption that $C$ only returns its input as better neighbor if the input is locally optimal and since $x^1=w(x^0)$ the colors of $x^0$ induce a local optimum of $C$. This finishes the proof of Theorem \ref{plsred}.
\end{proof}		

\section{Smoothed Complexity of Local Max-Cut}
\label{secsmoothed}

We consider the smoothed complexity of local \mc\ 
for graphs with degree  $O(\log n)$. Smoothed analysis, as 
introduced by Spielman and Teng \cite{ST04}, is motivated by 
the observation that practical data is often subject to some small 
random noise. 
Formally, let $\Omega_{n,m}$ be the set of all weighted graphs 
with $n$ vertices and $m$ edges, in which each graph has maximum degree 
$O(\log n)$. 
In this paper, if $A$ is an algorithm on graphs with 
maximum degree $O(\log n)$, 
then the smoothed complexity of $A$ with $\sigma$-Gaussian perturbation 
is 
$$ Smoothed_A^{\sigma} (n) =  \max_{m} \max_{G \in \Omega_{n,m}} \E_{\mbox{x}_m}[T_A(G^{w_{\max} \cdot \mbox{x}_m})] ,$$
where $\mbox{x}_m =(x_1, \dots , x_m)$ 
is a vector of length $m$, in which each entry is 
an independent Gaussian random variable of standard deviation $\sigma$ and 
mean $0$. $\E_{\mbox{x}_m}$ indicates that the expectation is taken 
over vectors $\mbox{x}_m$ according to the 
distribution described before, 
$T_A(G)$ is the running time of $A$ on $G$, and $G^{w_{\max} \cdot 
\mbox{x}_m}$ 
is the graph obtained from $G$ by adding $w_{\max} \cdot x_i$
to the weight of the $i$-th edge in $G$, where $w_{\max}$ is the 
largest weight in the graph. We assume that the 
edges are considered according to some arbitrary but fixed ordering.  

According to Spielman and Teng, 
an algorithm $A$ has polynomial smoothed complexity 
if there exist positive constants $c'$, $n_0$, $\sigma_0$, $k_1$, and $k_2$
such that for all $n > n_0$ and $0 \leq \sigma< \sigma_0$ we have 
$$ Smoothed_A^{\sigma} (n) < c' n^{k_1} \cdot \sigma^{-k_2} .$$
In this paper, we use a relaxation of polynomial 
smoothed complexity \cite{ST09}, which builds up on Blum and Dungan \cite{BD02}  
(see also Beier and V\"ocking \cite{BV04}). According to this relaxation,
an algorithm $A$ has probably polynomial smoothed complexity 
if there exist positive constants $c'$, $n_0$, $\sigma_0$, and $\alpha$
such that for all $n > n_0$ and $0 \leq \sigma< \sigma_0$ we have
$$\max_{m} \max_{G \in \Omega_{n,m}} 
\E_{\mbox{x}_m}[T^{\alpha}_A(G^{w_{\max} \cdot \mbox{x}_m})]
< \frac{c' n}{\sigma} .$$ 

\begin{theorem}
\label{thesmoothed}
Let $A$ be some FLIP local search algorithm for local \textsc{Max-Cut}. 
Then, $A$ has probably polynomial smoothed complexity on any graph with 
maximum degree $O(\log n)$.
\end{theorem}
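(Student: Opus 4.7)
The plan is to establish that with high probability every cut-improving FLIP step must gain at least a polynomially small amount, so that the number of iterations is polynomially bounded; the maximum-degree hypothesis $O(\log n)$ enters exactly to keep a natural union bound tractable.

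First, for every vertex $v$ of degree $d_v = O(\log n)$ and every sign vector $\epsilon\in\{-1,+1\}^{N(v)}$, define $\Delta_{v,\epsilon} := \sum_{u\in N(v)} \epsilon_u\, w_{uv}$. The change in cut weight when $v$ flips from any given partition equals, up to sign, exactly one such $\Delta_{v,\epsilon}$, with $\epsilon$ determined by the colors of $v$'s neighbors. After perturbation, $\Delta_{v,\epsilon}$ is a Gaussian with standard deviation $\sqrt{d_v}\,\sigma w_{\max}\geq \sigma w_{\max}$, so its density is bounded pointwise by $1/(\sqrt{2\pi}\,\sigma w_{\max})$ and $\Pr[0<\Delta_{v,\epsilon}\leq\delta]\leq \delta/(\sqrt{2\pi}\,\sigma w_{\max})$. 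Since $d_v=O(\log n)$, the number of pairs $(v,\epsilon)$ is at most $n\cdot 2^{O(\log n)}=n^{O(1)}$, so a union bound yields
$$\Pr\!\Bigl[\min_{(v,\epsilon):\Delta_{v,\epsilon}>0}\Delta_{v,\epsilon}\leq\delta\Bigr]\leq n^{O(1)}\,\delta/(\sigma w_{\max}).$$

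Independently, a Gaussian tail bound together with a union bound over the $m=n^{O(1)}$ edges shows that the maximum post-perturbation cut value $C_{\max}$ is at most $\mathrm{poly}(n)\cdot w_{\max}$ except with probability $e^{-\Omega(n)}$. On the intersection of these two good events, every improving flip gains at least $\delta$, so $T\leq C_{\max}/\delta\leq \mathrm{poly}(n)/\delta$. Choosing $\delta=C_{\max}/t$ and combining, one obtains for every $t>0$
$$\Pr[T>t]\leq \mathrm{poly}(n)/(\sigma t)+e^{-\Omega(n)},$$
while the crude worst-case bound $T\leq 2^n$ still holds always. To deduce probably polynomial smoothed complexity, write $\E[T^\alpha]=\alpha\int_0^\infty t^{\alpha-1}\Pr[T>t]\,dt$ for a sufficiently small constant $\alpha>0$, split at $t_0=\mathrm{poly}(n)/\sigma$, and use $\Pr[T>t]\leq 1$ below $t_0$ and the tail bound above. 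The polynomial part contributes $n^{O(\alpha)}\sigma^{-\alpha}$ (using $\alpha<1$ to integrate $t^{\alpha-2}$), while the exponential-failure part contributes $e^{-\Omega(n)}\cdot 2^{n\alpha}=o(1)$ for $\alpha$ small enough. Since $\sigma<\sigma_0<1$ gives $\sigma^{-\alpha}<\sigma^{-1}$, and $\alpha$ can be chosen so small that $n^{O(\alpha)}\leq n$, we conclude $\E[T^\alpha]\leq c'n/\sigma$.

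The main obstacle is controlling the far tail of $T$: since local Max-Cut is PLS-complete, no deterministic polynomial bound on $T$ is available, and the tail must be suppressed by two independent sources of randomness working together -- the polynomial Markov-type decay $\mathrm{poly}(n)/(\sigma t)$ (which needs $\alpha<1$ to integrate near infinity) and the exponentially small probability $e^{-\Omega(n)}$ that $C_{\max}$ exceeds its typical scale (which is what beats the crude $2^{n\alpha}$ worst case). The degree bound $d=O(\log n)$ is used in exactly one place, namely to keep $2^{d_v}$ polynomial in $n$ so that the union bound over pairs $(v,\epsilon)$ in the first step still only loses a polynomial factor.
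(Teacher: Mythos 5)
Your proposal is correct and takes essentially the same route as the paper: bound the density of each signed sum of incident edge weights by $O(1/(\sigma w_{\max}))$, union-bound over the $n\cdot 2^{O(\log n)}=n^{O(1)}$ pairs (vertex, sign vector) to lower-bound the minimum improving gain, bound the maximum cut value, and divide. The one place you differ is that the paper stops at the tail bound (their inequality $(\ref{eq1smoothed})$, parametrized by the failure probability $\delta$) and appeals to Beier--V\"ocking for the passage to $\E[T^\alpha]<c'n/\sigma$, whereas you carry out that passage explicitly via the layer-cake formula, splitting at $t_0=\mathrm{poly}(n)/\sigma$, using $\alpha<1$ to make $\int t^{\alpha-2}\,dt$ converge and a small enough $\alpha$ to kill the crude $2^n$ worst case against the $e^{-\Omega(n)}$ event that some perturbation is atypically large; this is a self-contained substitute for the citation and also sidesteps the paper's slightly awkward two-case split between $\delta=n^{-O(1)}$ and $\delta=n^{-\Omega(1)}$.
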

\begin{proof}
Let $V=\{ v_1, \dots , v_n\}$, and denote 
by $d_i$ the degree of $v_i$. 
Furthermore, let $w_{i,j}$ be the 
weight of egde $(v_i,v_j)$. 
Let $m = |E|$, and $\mbox{x}_m =
(x_1, \dots , x_m)$ a vector of Gaussian random variables 
of standard deviation $\sigma$ and mean $0$. Alternatively, 
we denote by $x_{i,j}$ the Gaussian random variable which perturbates 
edge $(v_i,v_j)$, i.\ e., $\tilde{w}_{i,j} = w_{i,j} + w_{\max} 
\cdot x_{i,j}$ represents the weight of $(v_i,v_j)$ in the perturbated 
graph $G^{w_{\max} \cdot \mbox{x}_m}$. 

In the following, $G$ is an arbitrary graph in $\Omega_{n,m}$ where $m=O(n \log n)$. We show that for any $\delta \in (0,1)$ there are constants 
$c'$, $n_0$, $\sigma_0$, $k_1$, and $k_2$
such that for all $n > n_0$ and $0 \leq \sigma< \sigma_0$ we obtain
\begin{eqnarray}
\label{eq1smoothed}
 Pr_{\mbox{x}_m}[T_A(G^{w_{\max} \cdot \mbox{x}_m}) 
< \delta^{-2} c' n^{k_1} \cdot \sigma^{-k_2} ] > 1- \delta .
\end{eqnarray}
Then, (\ref{eq1smoothed}) implies the statement of the 
theorem (cf.~\cite{BV04}).

In order to show the inequality above, we make use of the fact that 
the sum of $k$ Gaussian random variables with variance $\sigma^2$ and 
mean $0$ is a Gaussian random variable with variance $k\sigma^2$ 
and mean $0$. Let $X_1, \dots , X_k$ be $k$ Gaussian random variables
with variance $\sigma^2$ and mean $0$. Furthermore, let $a$ be some 
real number, and $S \subset \{ 1 , \dots , k\}$. 
Then, we can state the following claim.
\begin{claim}
\label{claim1smoothed}
For some large constant $c$ and any $\delta' \in (0,1)$ 

\begin{equation}\label{equationsmoothed} Pr \left[ \left| \sum_{j \in S} X_j - \sum_{j \not\in S} X_j-a \right| \leq 
\frac{\delta' \sigma}{c\cdot 2^k}\right] \leq \delta' \cdot 2^{-k} .
\end{equation}
\end{claim}

\begin{proof}
Let $X= \sum_{j \in S} X_j$ and $Y=\sum_{j \not\in S} X_j$. Then,
$X$ is a Gaussian random variable with variance $|S|\sigma^2$, and 
$Y$ is a Gaussian random variable with variance $(k-|S|)\sigma^2$. 
Since $X$ is distributed according to the density function 
$\frac{1}{\sqrt{2\pi|S|}\sigma}e^{-x^2/(|S|\sigma)^2}$, for any real
$b$ it holds that 
\begin{eqnarray*}
Pr\left[ |X-b| \leq \frac{\delta' \sigma}{c\cdot 2^k}\right]
&\leq& \int_{b-\frac{\delta' \sigma}{c\cdot 2^k}}^{b+\frac{\delta' 
\sigma}{c\cdot 2^k}} \frac{1}{\sqrt{2\pi|S|}\sigma}e^{-x^2/(|S|\sigma^2)} 
\mbox{d} x \\
&\leq& \frac{1}{\sqrt{2\pi|S|}\sigma} \cdot 
\frac{2 \delta' \sigma}{c\cdot 2^k} \leq \delta' \cdot 2^{-k}
\end{eqnarray*}
Setting $b=Y+a$ we obtain the claim.
\end{proof}

In order to show the theorem, we normalize the weights by setting 
the largest weight to $1$, and dividing all other weights by 
$w_{\max}$. That is, we obtain some graph $G'$ with weights 
$w'_{i,j}= w_{i,j}/w_{\max}$. The edge weights of $G'$ are 
perturbated accordingly by Gaussian random variables with variance 
$\sigma^2$ and mean $0$. Clearly, $T_A(G') =T_A(G)$ and 
$Pr[T_A(G'^{\mbox{x}_m}) 
< \delta^{-1} c' n^{k_1} \cdot \sigma^{-k_2} ] =
Pr[T_A(G^{w_{\max} \cdot \mbox{x}_m}) 
< \delta^{-1} c' n^{k_1} \cdot \sigma^{-k_2} ] $. Therefore, we 
consider $G'$ instead of $G$ in the rest of the proof.

In the next step, we show that for an arbitrary but fixed partition $P$ of $G'$
and node $v_i$, flipping $v_i$ 
increases (or decreases) the cut by $\Omega \left( \frac{\delta \sigma}{n 2^{d_i}}
\right)$, with probability $1-\delta/2 \cdot  
n^{-1} 2^{-d_i}$. This is easily obtained 
from Claim (\ref{claim1smoothed})
in the following way. Define 
$S'$ to be the set of the neighbors of $v_i$, which are in the same partition 
as $v_i$ according to $P$. Let $e_1, \dots , e_{d_i}$ be the edges incident to 
$v_i$, and denote by $w_1, \dots , w_{d_i}$ the weights of 
these edges in $G'$. We assume w.\ l.\ o.\ g.~that $e_1, \dots , e_{|S'|}$ 
have both ends in the same partition as $v_i$, and $S = \{ 1, \dots , |S'|\}$.
Furthermore, let $a=\sum_{j \not\in S} w_j - \sum_{j \in S} w_j$, $k=d_i$,
and $\delta' = \delta/(2n)$. Applying now Equation (\ref{equationsmoothed})
we obtain the desired result.

For a node $v_i$, 
there are at most $\sum_{i=0}^{d_i} \left({{d_i} \atop i } \right)
= 2^{d_i}$ possibilities to partition the 
edges into two parts, one subset in the same partition as $v_i$
and the other subset in the other partition. Therefore, 
by applying the union bound we conclude that any flip of an unhappy $v_i$  
increases the cut by 
$\Omega \left( \frac{\delta \sigma}{n 2^{d_i}} \right)$,
with probability at least $1-\delta/2 \cdot n^{-1}$. 
Since there are $n$ nodes in total, 
we may apply the union bound again, and obtain that every flip (carried out by some unhappy 
node) increases 
the cut by $\Omega \left( \frac{\delta \sigma}{n 2^{d_i}} \right)$,
with probability at least $1-\delta/2$. 
Since 
$d_i = O(\log n)$ and 
the largest weight in $G'$ is $1$, we conclude that the 
largest cut in $G'$ may have weight $O(n \log n)$. Furthermore, 
for each $i$ we have $|x_i| \leq l \sqrt{\ln n}$ with probability 
$1-O(n^{-l})$ whenever $l$ is large enough (remember that $\sigma<1$). 
Let $A_1$ be the event that there is some $x_i$ with $|x_i| = \omega(\log n)$, 
and $A_2$ is the event that there is a node $v_i$ and a partition $P$ such that 
flipping $v_i$ increases the cut by at most  
$\tau \left( \frac{\delta \sigma}{n 2^{d_i}}
\right)$, where $\tau$ is a very small constant. 
We know that $Pr[A_1] = n^{-\omega(1)}$ and 
$Pr[A_2] < \delta/2$.
Thus,
as long as $\delta = n^{-O(1)}$, the total number of 
steps needed by $A$ is at most 
$$T_A(G'^{\mbox{x}_m}) = 
O \left( \frac{n^2 \log^2 n 2^{d_i}}{\delta \sigma} \right) 
= \frac{n^{O(1)}}{\delta \sigma}$$
with probability $1-(Pr[A_1] + Pr[A_2]) > 1-\delta$. 

Now we consider the case when $\delta = n^{-\Omega(1)}$. Again, 
let $A_1$ be the event that there is some $x_i$ with $|x_i| = \omega(\log \delta^{-1})$. 
Since $x_i$ is a Gaussian random variable, $Pr[A_1] = \delta^{\omega(1)}$. 
On the other hand, let $A_2$ be the event that 
there is a node $v_i$ and a partition $P$ such that 
flipping $v_i$ increases the cut by at most 
$\tau \left( \frac{\delta \sigma}{n 2^{d_i}}
\right)$, where $\tau$ is a very small constant. Again, 
$Pr[A_2] < \delta/2$. Then, 
the total number of 
steps needed by $A$ is at most 
$$T_A(G'^{\mbox{x}_m}) = 
O \left( \frac{n^2 \log n \cdot \log \delta^{-1} 
\cdot 2^{d_i}}{\delta \sigma} \right) 
= \frac{n^{O(1)} \log \delta^{-1}}{\delta \sigma}$$
with probability $1-(Pr[A_1] + Pr[A_2]) > 1-\delta$. Note that if 
$\max_i |x_i| = \log \delta$ then the input size $|G'^{\mbox{x}_m}|$ is 
$O(n \log \log \delta)$. Hence, the result above does not imply that 
$T_A(G'^{\mbox{x}_m}) = \frac{|G'^{\mbox{x}_m}|^{O(1)}}{\delta \sigma}$ 
with probability $1-\delta$. 
\end{proof}

\section{Conclusion and Open Problems}
In this paper, we introduced a technique by which we can substitute graphs with certain nodes of unbounded degree, namely so called comparing nodes, 
by graphs with nodes of maximum degree five such that local optima of the former graphs induce unique local optima of the latter ones. 
Using this technique, we show that the problem of computing a local optimum of the \mc\ problem is \pls-complete even on graphs with maximum degree five. 
We do not show that our \pls-reduction is tight, but the tightness of our reduction would not result in the typical knowledge gain anyway since the properties that 
come along with the tightness of \pls-reductions, namely the \pspace-completeness of the standard algorithm problem and the existence of instances that are 
exponentially many improving steps away from any local optimum, are already known for the maximum degree four \cite{monien1}. The obvious remaining question 
is to ask for the complexity of local \mc\ on graphs with maximum degree four. Is it in \Pol? Is it \pls-complete? Another important question 
is whether local \mc\ has in general probably polynomial smoothed complexity. Unfortunately, the methods used so far seem not to be
applicable to show that in graphs with super-logarithmic degree
the local \mc\ problem has probably polynomial smoothed complexity (cf.~also \cite{heiko1}).
\paragraph{\textbf{Acknowledgement.}}
We thank Dominic Dumrauf, Martin Gairing, Martina H\"ullmann, Burkhard Monien, and Rahul Savani for helpful suggestions.

\newpage

\appendix

\section{Appendix}
\subsection{Definition of the nodes of type I and III and proof of Lemma \ref{Gf}}

We introduce the types I and III of the nodes of degree $\leq 4$, since their use simplifies the proof.

\begin{definition}
For a node $u$ and edges $a_{u},b_{u},c_{u},d_{u}$ incident to $u$ with $w(a_{u}) \ge w(b_{u}) \ge w(c_{u}) \ge w(d_{u})$ we distinguish the following types for $u$: 

	\begin{itemize}
		\item type I: if $w(a_{u}) > w(b_{u}) + w(c_{u}) + w(d_{u})$
		\item type III: if $w(a_{u}) + w(d_{u}) < w(b_{u}) + w(c_{u})$
	\end{itemize}

\end{definition}

\begin{observation}\label{basics}
		For a graph $G=(V,E)$, a partition $P$, a node $u$, and edges $a_{u},b_{u},c_{u},d_{u}$ incident to $u$ with $w(a_{u}) \ge w(b_{u}) \ge w(c_{u}) \ge w(d_{u})$ the following two conditions hold:
		
		\begin{itemize}
			\item If $u$ is of type I then $u$ is happy in $G_{P}$ if and only if $a_{u}$ is in the cut. 
			\item If $u$ is of type III then $u$ is happy in $G_{P}$ if and only if at least two of the edges $a_{u}$, $b_{u}$,$c_{u}$ are in the cut.
		\end{itemize}
	\end{observation}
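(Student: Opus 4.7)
The plan is to work from the fundamental identity that flipping a node $u$ changes the cut weight by exactly $W_{\text{out}}(u) - W_{\text{in}}(u)$, where $W_{\text{in}}(u)$ denotes the total weight of edges incident to $u$ that currently lie in the cut and $W_{\text{out}}(u)$ the total weight of those that do not. Therefore $u$ is happy in $G_P$ precisely when $W_{\text{in}}(u) \ge W_{\text{out}}(u)$. I would open with this reduction so that the rest of the argument becomes a purely arithmetic comparison between two subsums of $\{w(a_u), w(b_u), w(c_u), w(d_u)\}$ determined by which edges cross the partition.

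For the type I case, I would split on whether $a_u$ crosses the cut. If it does, then $W_{\text{in}}(u) \ge w(a_u)$ and $W_{\text{out}}(u) \le w(b_u)+w(c_u)+w(d_u)$, so the defining inequality $w(a_u) > w(b_u)+w(c_u)+w(d_u)$ yields $W_{\text{in}}(u) > W_{\text{out}}(u)$ and hence happiness. Conversely, if $a_u$ is not in the cut, the roles swap: $W_{\text{out}}(u) \ge w(a_u) > w(b_u)+w(c_u)+w(d_u) \ge W_{\text{in}}(u)$, so $u$ is unhappy. Both directions of the ``iff'' fall out of this single case split.

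For the type III case, I would split on whether at least two of $a_u,b_u,c_u$ lie in the cut. In that case the smallest possible value of $W_{\text{in}}(u)$ restricted to these three edges is $w(b_u)+w(c_u)$ (attained when $a_u$ is the one outside), so $W_{\text{in}}(u) \ge w(b_u)+w(c_u)$, while $W_{\text{out}}(u)$ is bounded above by $w(a_u)+w(d_u)$, the largest possible contribution from $d_u$ together with the excluded one among $\{a_u,b_u,c_u\}$. The type III condition $w(a_u)+w(d_u) < w(b_u)+w(c_u)$ then gives happiness. Symmetrically, if at most one of $a_u,b_u,c_u$ is in the cut, the same inequality, applied with the roles of ``in'' and ``out'' interchanged, forces unhappiness.

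There is no real obstacle here: the content of the observation is a direct bookkeeping consequence of the definitions, and the only care needed is choosing the extremal assignments of ``in/out'' that maximize $W_{\text{out}}(u)$ or minimize $W_{\text{in}}(u)$ on each side of the case split, which is why the hypotheses are stated in terms of the two specific sums $w(a_u)+w(d_u)$ and $w(b_u)+w(c_u)$ for type III, and the single threshold $w(a_u)$ against the other three combined for type I. I would conclude by noting that both directions in both bullet points have been covered by the two case splits, completing the proof.
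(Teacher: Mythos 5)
Your proof is correct and is exactly the intended argument: the paper states this as an unproved Observation, and your reduction of happiness to $W_{\text{in}}(u)\ge W_{\text{out}}(u)$ followed by the extremal in/out case splits is the standard direct verification. Nothing to add.
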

	
The original statement of Lemma \ref{Gf} in \cite{monien1} contained the following two properties.

\begin{theorem}\label{P-hardness}
\begin{enumerate}[label=(\alph*)]
\item \label{P-hard} Finding a local optimum of \mc\ on graphs containing only nodes of type I and III is $P$-hard with respect to logspace-reduction.
\item 
\label{Gfs} Let $f:\{0,1\}^{n} \rightarrow \{0,1\}^{m}$ be a function and $C$ be a boolean circuit with $N$ gates which computes $f$. Then, using $O(log N)$ space, one can compute a graph $G^{C}=(V^{C},E^{C})$ containing only nodes of type I and III among which there are nodes $s_{1},\ldots,s_{n},t_{1},\ldots,t_{m} \in V^{C}$ of degree one such that $f(c_{P}(s))=c_{P}(t)$ in every local optimum $P$ of $G^{C}$.
\end{enumerate}
\end{theorem}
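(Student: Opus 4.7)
The plan is to prove Part (b) by constructing $G^C$ from a small library of gadgets built only from nodes of type I and type III, and then derive Part (a) as a standard reduction from the Circuit Value Problem. Since NOR is functionally complete, I would reduce the input circuit to a NOR-only circuit during a preprocessing logspace scan (introducing at most a constant-factor blowup), so that the construction only needs to simulate NOR gates and constants.

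The core gadget library: (i) \textbf{Constant anchors.} Introduce a pair of nodes $s^{\star},t^{\star}$ joined by a single edge of weight exceeding the sum of all other edge weights in the graph; this forces $c_P(s^{\star})\neq c_P(t^{\star})$ in every local optimum, and the two colors serve as $0$ and $1$. Further constant nodes simply copy $s^{\star},t^{\star}$ through type I buffers. (ii) \textbf{NOR gadget.} For a gate $g$ with inputs $\alpha,\beta$ and output $\gamma$, place a type III node $v_\gamma$ whose three heaviest edges (all of weight $w$) go to a copy of $\alpha$, a copy of $\beta$, and a $1$-constant node, and whose fourth, lightest edge of weight $w' < w$ carries $c_P(v_\gamma)$ forward to the next layer. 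By Observation \ref{basics}, at least two of the top three edges of $v_\gamma$ lie in the cut in any local optimum, so $c_P(v_\gamma)$ equals the minority color among $\{c_P(\alpha),c_P(\beta),1\}$; a short case check on the four input combinations shows this equals $\neg(c_P(\alpha)\vee c_P(\beta))$. (iii) \textbf{Fan-out via type I buffers.} For each wire, grow a bounded-degree tree of type I nodes rooted at $v_\gamma$; each buffer $u$ has its heaviest edge pointing back toward the root and up to three lighter edges pointing outward, so Observation \ref{basics} forces $c_P(u)=\neg c_P(\text{parent})$. Positive and negative copies are read off at even and odd depths respectively.

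Next I would set up the \textbf{weight hierarchy} so that the gadgets compose correctly. Order the edges by the topological depth of the gate they belong to, and assign weights $w_0\gg w_1\gg\cdots\gg w_d$ dropping sufficiently fast (e.g.\ $w_\ell = 2^{-(\ell+1)k}\cdot w_0$ for suitable $k$) that each level's three heavy edges dominate the lighter feedback edges coming from later levels. Within each NOR gadget the three heavy edges and the one output edge satisfy $w_\ell + w' < 3w_\ell$, fulfilling the type III inequality $w(a_u)+w(d_u)<w(b_u)+w(c_u)$; within each buffer tree the upstream edge is chosen of strictly greater weight than the sum of the three downstream edges, giving the type I inequality. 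Correctness of $G^C$ then follows by induction on the topological order of the gates: assuming all earlier gate outputs have the correct color, the three heavy neighbors of $v_\gamma$ are already colored, so $v_\gamma$ and its buffer copies must realize the NOR of those colors. The $s_i$ are defined as the degree-one leaves at the top of the input buffer trees, and the $t_j$ as the degree-one leaves at the bottom of the output buffer trees. The construction can be enumerated gate-by-gate and wire-by-wire using counters of $O(\log N)$ bits (gate index, wire index, buffer index, weight exponent), so it runs in logspace.

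For Part (a) I would reduce from the Circuit Value Problem, which is $P$-complete under logspace reductions by Ladner \cite{ladner}. Given an instance $\langle C,x\rangle$ with $C$ a circuit and $x\in\{0,1\}^n$, first invoke the Part (b) construction to get $G^C$, then attach to each input node $s_i$ a single edge to the corresponding $0$- or $1$-constant, of weight strictly larger than the sum of all other weights incident to $s_i$; this forces $c_P(s_i)=x_i$ in every local optimum. By Part (b), any local optimum $P$ of the augmented graph satisfies $c_P(t)=C(x)$, so reading off $c_P(t_1)$ decides CVP. The augmentation is clearly logspace.

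The main obstacle is the \textbf{weight-hierarchy bookkeeping} in Step 4: one must verify simultaneously that (1) for every type III gate node the three gate-internal edges remain among its heaviest three despite the feedback from the output buffer tree and from any gates that this node is an input to, and (2) for every type I buffer the upstream edge continues to dominate even when the node serves as one of the three heavy inputs to several later NOR gadgets. A naive exponential spread of weights suffices but blows up to weights of length $\Theta(d)$, where $d$ is the circuit depth; since weights are allowed to be written in binary in polynomial space and each bit is computable in logspace, this is fine, but one must be careful that the per-edge weight can actually be produced by the logspace transducer, which amounts to writing the exponent $\ell$ and a fixed polynomial-size base.
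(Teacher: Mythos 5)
Your proposal takes a genuinely different route from the paper's, although the central gadget is the same: a degree-4 type-III node whose three equal heavy edges go to (copies of) $\alpha$, $\beta$, and a $1$-constant is forced to the minority color, which is $\neg(\alpha\vee\beta)$. The differences are all in the plumbing. Instead of fanout trees, the paper preprocesses the circuit so that every NOR gate has fanout one and every NOT gate has fanout at most two; consumers then read the producer's node directly through the consumer's own heavy edges, and no copies are ever manufactured. Instead of an anchor pair with type-I copy trees, the paper runs a single chain $v_{N+1},\ldots,v_{3N+1}$ with exponentially growing edge weights $2^{N+1},\ldots,2^{3N}$; in any local optimum the chain alternates colors, and each gate $g_i$ simply borrows the chain node $v_{N+2i}$ as its fresh $1$-constant at the right weight level $2^i$. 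As a consequence every edge weight in the paper's graph is a single power of two with exponent at most $3N$, and the hierarchy bookkeeping you flag at the end never arises --- the topological ordering alone makes every gate node type III and every other node type I. Your scheme should also go through once that bookkeeping is carried out, but two local points need fixing: the type-III inequality for the NOR node is $w_\ell + w' < 2w_\ell$ (i.e.\ $w' < w_\ell$), not $3w_\ell$; and in part (a) the heavy anchoring edge at $s_i$ must go to the constant of value $\neg x_i$ rather than $x_i$, since a type-I node is forced to the \emph{opposite} color of its heaviest neighbor.
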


\begin{proof} At first, we show property \ref{P-hard}. We reduce from the P-complete problem circuit-value \cite{ladner}. An instance of the circuit-value problem is a Boolean circuit $C$ together with an assignment for the inputs of $C$. The problem asks for the output of $C$ on the given input assignment. For the gates, we assume w.\ l.\ o.\ g. that they are either NOR-gates with a fanin of two and fanout of one or NOT-gates with a fanin of one and fanout of at most two. We order the gates $g_{N},g_{N-1},\ldots,g_{1}$ topologically such that if $g_{i}$ is an input of $g_{j}$ then $i>j$. We let $g_{m},\ldots,g_{1}$ be the output of the $C$ and assume w.\ l.\ o.\ g.  that the gates $g_{N},\ldots,g_{N-n+1}$ are NOT-gates, that they are the only gates in which the inputs of the circuit $C$ occur, and that the output gates are NOT-gates with fanout zero. Let $I_{1}(g_{i})$ and $I_{2}(g_{i})$ be the gates which are the inputs of a NOR-gate $g_{i}$ for $1 \le i \le N-n$, $I(g_{i})$ be the input of a NOT-gate $g_{i}$ for $1 \le i \le N-n$, and $value(g_{i})$ be the corresponding value of the assignment of the input of $g_{i}$ with $N-n+1 \leq i \leq N$.

We construct a graph $T=(V,E)$ with weights $w:E \rightarrow \mathbb{N}$ from $C$ in the following way. The set of nodes is $V=\{v_1,\ldots,v_{3N+1}\}$ and the set $E$ as follows:

\begin{enumerate}[label=(\roman*)]
\item \label{erste} For each NOR-gate $g_i$ with $1 \le i \le N-n$ we have the following edges of weight $2^i$: $\{v_i,v_j\} \in E$ if $I_k(g_i)=g_j$ for $1 \le k \le 2, 1 \le j \le N$, and $\{v_i,v_{N+2i}\} \in E$. 
\item \label{zweite} For each NOT-gate $g_i$ with $1 \le i \le N-n$ with $I(g_i)=g_j$ for $1 \le j \le N$ we have $\{v_i,v_j\} \in E$ with $w(\{v_i,v_j\}) = 2^i$. 
\item \label{dritte} For each $g_i$ with $N-n+1 \le i \le N$ we have the following edge of weight $2^i$: if $value(g_i)=1$ then $\{v_i,v_{N+2i}\} \in E$, otherwise $\{v_i,v_{N+2i-1}\} \in E$.
\item \label{vierte} $\{v_i,v_{i+1}\} \in E$ for all $N+1 \le i \le 3N$ with a weight of $2^i$. 
\end{enumerate}

Then, each node $v_i$ for $N+1 \le i \le 3N$ has a degree of at most three, is of type I, and the heaviest edge incident to $v_i$ is $\{v_i,v_{i+1}\}$. The node $v_{3N+1}$ has a degree of one, is also of type I, and its heaviest edge is $\{v_{3N+1},v_{3N}\}$. Each node $v_i$ for $N-n+1 \le i \le N$ has a degree of two, is of type I, and its heaviest edge is the one described in \ref{dritte}. Each node $v_i$ for $1 \le i \le N-n$, for which $g_i$ is a NOT-gate, has a degree of at most three, is of type I, and its heaviest edge is the edge of weight $2^i$ described in \ref{zweite}. However, if $g_i$ is a NOR-gate then $v_i$ has a degree of four, is of type III, and the edges with influence on it are the three edges of weight $2^i$ described in \ref{erste}.

Consider a local optimum $P$ of $T$. Due to the symmetry of the local \mc\ problem we may assume w.\ l.\ o.\ g.  that $c_{P}(v_{3N})=1$. In the following, we use Observation \ref{basics} to derive the colors of the remaining nodes as induced by their types. First, $c_{P}(v_{i}) \ne c_{P}(v_{i+1})$ for all $N+1 \le i \le 3N$. Thus, $c_{P}(v_{N+2i}) = 1$ and $c_{P}(v_{N+2i-1}) = 0$ for all $1 \le i \le N$ and $c_{P}(v_{3N+1}) = 0$. Then, for each $N-n+1 \le i \le N$ we have $c_{P}(v_{i}) = 1$ if $value(g_i)=1$ and $c_{P}(v_{i}) = 0$ otherwise, i.\ e. the colors of the nodes $v_i$ for $N-n+1 \le i \le N$ correspond to the assignment for the inputs of $C$. Now consider the nodes $v_i$ for $1 \le i \le N-n$. If $g_i$ is a NOT-gate with $I(g_i)=g_j$ for $m+1 \le j \le N$ then $c_P(v_i) \ne c_P(v_j)$, i.\ e. the color of $v_i$ corresponds to the output of a NOT-gate w.\ r.\ t. the color of $v_j$. Finally, if $g_i$ is a NOR-gate with $I_1(g_i)=g_k$ and $I_2(g_i)=g_j$ for $m+1 \le j < k \le N$ then $c_P(v_i)=1$ if and only $c_P(v_j)=c_P(v_k)=0$ since $v_i$ is of type III and its neighbor $v_{N+2i}$ is already known to be black. Thus, the color of $v_i$ corresponds to the output of a NOR-gate with respect of the colors of $v_j$ and $v_k$. Therefore, the color of each node $v_i$ for $1 \le i \le N-n$ corresponds to the output of $g_i$ in $C$. In particular, the colors of $v_1,\ldots,v_m$ correspond to the output of $C$.

In the following we show that our reduction is in logspace. The number of edges in $T$ is linear in N since each node has maximum degree four. The weights of the edges are powers of two. Thus, we only need to store the exponents of the weights. If we write an edge weight to the output tape then we first write the ``1'' for the most significant bit and then we write as often a ``0'' as determined by the exponent.

Now we show \ref{Gfs}. We let $G^{C}=(V^{C},E^{C})$ be the graph obtained from $T$ by omitting the edges described in \ref{dritte}. Furthermore, we let $s_i=v_{N-n+i}$ for $1 \leq i \leq n$ and $t_{j}=v_{j}$ for $1 \leq j \leq m$. Then, the nodes $s_{i}$ and $t_{j}$ are of degree one. As in the proof for \ref{P-hard} we get $f(c_{P}(s))=c_{P}(t)$.
\end{proof}


\begin{thebibliography}{99}
	\bibitem{AMR09} David Arthur, Bodo Manthey, and Heiko R{\"o}glin: \textsl{k-Means has polynomial smoothed complexity.} FOCS'09, 405-414, 2009.
  \bibitem{ackermann1} Heiner Ackermann, Heiko Röglin, Berthold Vöcking: \textsl{On the impact of combinatorial structure on congestion games}. Journal of the ACM (JACM), Vol. 55(6), Art. 25, 2008.
  \bibitem{BD02} Avrim Blum and John Dunagan: \textsl{Smoothed analysis of the perceptron algorithm for linear programming.} SODA, 905-914, 2002.
  \bibitem{BV04} Ren{\'e} Beier and Berthold V{\"o}cking: \textsl{Typical properties of winners and losers in discrete optimization.} STOC, 343--352, 2004.
  \bibitem{englert1} Matthias Englert, Heiko Röglin, Berthold Vöcking: \textsl{Worst case and probabilistic analysis of the 2-Opt algorithm for the TSP.} SODA, 1295-1304, 2006.
	\bibitem{fabrikant1} Alex Fabrikant, Christos H. Papadimitriou,  Kunal Talwar: \textsl{The complexity of pure Nash Equilibria}. STOC, 604-612, 2004.
	\bibitem{gairing1} Martin Gairing, Rahul Savani: \textsl{Computing stable outcomes in hedonic games.} In: Kontogiannis, S., Koutsoupias, E., Spirakis, P.G. (eds.), SAGT, LNCS 6386, 174--185, 2010.
	\bibitem{garey1} Michael~R. Garey, David~S. Johnson: \textsl{Computers and intractability, a guide to the theory of \NP-completeness}. Freeman, New York, NY, 1979.
	\bibitem{johnson1} David S. Johnson, Christos H. Papadimitriou, Mihalis Yannakakis: \textsl{How easy is local search?} Journal of Computer and System Sciences 37 (1), 79-100, 1988.
	\bibitem{krentel1}Mark W. Krentel. \textsl{Structure in locally optimal solutions}. FOCS, 216-221. 1989.
  \bibitem{KN07} Jonathan A. Kelner and Evdokia Nikolova: \textsl{On the hardness and smoothed complexity of quasi-concave minimization.} FOCS, 472-482, 2007.
	\bibitem{ladner} Richard E. Ladner: \textsl{The circuit value problem is log space complete for P}. SIGACT News, 7:1, 18-20, 1975.
	\bibitem{loebl1} Martin Loebl: \textsl{Efficient maximal cubic graph cuts}. In: Leach Albert,J., Monien,B.,
Rodríguez-Artalejo,M. (eds.) ICALP, LNCS 510, 351-362, 1991.
	\bibitem{monien2} Burkhard Monien, Dominic Dumrauf, Tobias Tscheuschner: \textsl{Local search: Simple, successful, but sometimes sluggish}. In: Abramsky,S., Gavoille,C., Kirchner,C., Meyer auf der Heide,F., Spirakis,P.G. (eds.) ICALP, LNCS 6198, 1--17, 2010.
	\bibitem{monien1} Burkhard Monien, Tobias Tscheuschner: \textsl{On the power of nodes of degree four in the local max-cut poblem}. In: Calamoneri,T., Diaz,J. (eds.) CIAC, LNCS 6078, 264--275, 2010.
	\bibitem{poljak1} Svatopluk Poljak: \textsl{Integer linear programs and local search for max-cut}. SIAM Journal on Computing 21(3), 450-465, 1995.
	\bibitem{poljak2} Svatopluk Poljak, Zsolt Tuza: \textsl{Maximum cuts and largest bipartite subgraphs.} Combinatorial Optimization, American Mathematical Society, Providence, RI, 181-244, 1995.
	\bibitem{heiko1} Heiko R{\"o}glin: \textsl{personal communication}, 2010. 
  \bibitem{RV07} Heiko R{\"o}glin and Berthold V{\"o}cking: \textsl{Smoothed analysis of integer programming.} Math. Program., 110 (1), 21-56, 2007.
  \bibitem{ST04} Daniel Spielmann, Shang-Hua Teng: \textsl{Smoothed analysis of algorithms: Why the simplex algorithm usually takes polynomial time.} Journal of the ACM (JACM), 51(3), 385-463, 2004.
  \bibitem{ST09} Daniel Spielmann, Shang-Hua Teng: \textsl{Smoothed analysis: an attempt to explain the behavior of algorithms in practice}, Commun. ACM 52 (10), 76-84, 2009.
	\bibitem{schaeffer1} Alejandro A. Schäffer, Mihalis Yannakakis: \textsl{Simple local search problems that are hard to solve}. SIAM Journal on Computing 20(1), 56-87, 1991.
  \bibitem{Ver06} Roman Vershynin: \textsl{Beyond Hirsch Conjecture: Walks on Random Polytopes and Smoothed Complexity of the Simplex Method.} FOCS, 133-142, 2006.
\end{thebibliography}
\end{document}